\definecolor{greyC}{RGB}{180,180,180}
\definecolor{greyL}{RGB}{235,235,235}
\newcommand*{\addFileDependency}[1]{
  \typeout{(#1)}
  \@addtofilelist{#1}
  \IfFileExists{#1}{}{\typeout{No file #1.}}
}
\theoremstyle{plain}
\newtheorem{theorem}{Theorem}[]
\newtheorem{proposition}[theorem]{Proposition}
\newtheorem{lemma}[theorem]{Lemma}
\newtheorem{corollary}[theorem]{Corollary}
\theoremstyle{definition}
\newtheorem{definition}{Definition}[]
\newtheorem{assumption}{Assumption}[]
\newtheorem{property}{Property}[]
\newtheorem{example}{Example}[]
\theoremstyle{remark}
\newtheorem{remark}{Remark}[]
\def\eqref#1{equation~\ref{#1}}
\def\1{\bm{1}}
\newcommand*{\argmin}{\mathop{\mathrm{argmin}}}
\newcommand{\wbar}{\bar{w}}
\newcommand{\Dcal}{\mathcal{D}}
\newcommand{\Ical}{\mathcal{I}}
\newcommand{\Lcal}{\mathcal{L}}
\newcommand{\Mcal}{\mathcal{M}}
\newcommand{\Ocal}{\mathcal{O}}
\newcommand{\Ebb}{\mathbb{E}}
\newcommand{\BlackBox}{\rule{1.5ex}{1.5ex}}  
\def\QED{~\rule[-1pt]{5pt}{5pt}\par\medskip}
\newenvironment{proof}{\par\noindent{\em Proof:\ }}{\hfill\BlackBox\\[.0mm]}
\newtheorem{theorem}{Theorem}[section]
\newtheorem{example}{Example}[section]
\newtheorem{property}{Property}
\newtheorem{lemma}{Lemma}[section]
\newtheorem{corollary}{Corollary}[section]
\newtheorem{definition}{Definition}[section]
\newcommand{\benr}{\begin{eqnarray}}
\newcommand{\eenr}{\end{eqnarray}}
\newcommand{\benrr}{\begin{eqnarray*}}
\newcommand{\eenrr}{\end{eqnarray*}}
\newcommand{\ben}{\begin{equation}}
\newcommand{\een}{\end{equation}}
\newcommand{\benn}{\begin{equation*}}
\newcommand{\eenn}{\end{equation*}}
\begin{document}
\date{}

\title{DreamDDP: Accelerating Data Parallel Distributed LLM Training with Layer-wise Scheduled Partial Synchronization}


\author[1]{Zhenheng Tang$^{\dagger}$}
\author[2]{Zichen Tang$^{\dagger}$}
\author[2]{Junlin Huang}
\author[2]{Xinglin Pan}
\author[2]{Rudan Yan}
\author[3]{Yuxin Wang}
\author[3]{Amelie Chi Zhou} 
\author[4]{Shaohuai Shi}
\author[2,1]{Xiaowen Chu$^{\ast}$}
\author[1]{Bo Li$^{\ast}$}

\affil[1]{The Hong Kong University of Science and Technology}
\affil[2]{The Hong Kong University of Science and Technology (Guangzhou)}
\affil[3]{Hong Kong Baptist University}
\affil[4]{Harbin Institute of Technology, Shenzhen}


\maketitle
\renewcommand{\thefootnote}{}
\footnotetext[1]{$^{\dagger}$These authors contributed equally to this work.}
\footnotetext[2]{$^{\ast}$Corresponding authors: \texttt{xwchu@hkust-gz.edu.cn}, \texttt{bli@ust.hk}}
\renewcommand{\thefootnote}{\arabic{footnote}} 

\begin{abstract}
The growth of large language models (LLMs) increases challenges of accelerating distributed training across multiple GPUs in different data centers. Moreover, concerns about data privacy and data exhaustion have heightened interest in geo-distributed data centers. Communication in geo-distributed data parallel training (DDP) with stochastic gradient descent (S-SGD) is the main bottleneck in low-bandwidth environments. Local SGD mitigates communication overhead by reducing synchronization frequency, and recent studies have successfully applied it to geo-distributedly pre-train LLMs. However, we identify that its model synchronization mechanism prevents overlapping communication and computation, which makes the system lose opportunities to overlap communication and computation.

To overcome this limitation, we expand the design space of local SGD by layer-wisely decoupling model synchronization. In each iteration, only some layers are synchronized instead of the entire model after a specific number of iterations.   Leveraging this methodology, we introduce DreamDDP, a training framework to accelerate low-bandwidth distributed training with three key innovations: (1) partial local SGD with theoretical assurances of convergence rates comparable to S-SGD; (2) overlapping parameter synchronization with computation without extra GPU memory occupation; (3) identifying and exploiting three properties to schedule the communication and computation to reduce the training time based on fine-grained profiling of layer-wise communication and computation time. Empirical evaluations conducted on 32 GPUs using prominent deep learning models, including ResNet-18, ResNet-50, GPT-2, and Llama-2, demonstrate that DreamDDP enhances the convergence properties of Local SGD (and Adam) and achieves speedups ranging from $1.49\times$ to $3.91\times$ over leading baseline methods.

\end{abstract}

\section{Introduction}\label{sec:intro}

\begin{figure}[!th]
\vspace{-0.4cm}
\centering
\setlength{\abovecaptionskip}{0.cm}
\setlength{\belowcaptionskip}{-0.2cm} 
    \subfigure[Llama-2]
    {
    \includegraphics[width=0.46\linewidth]{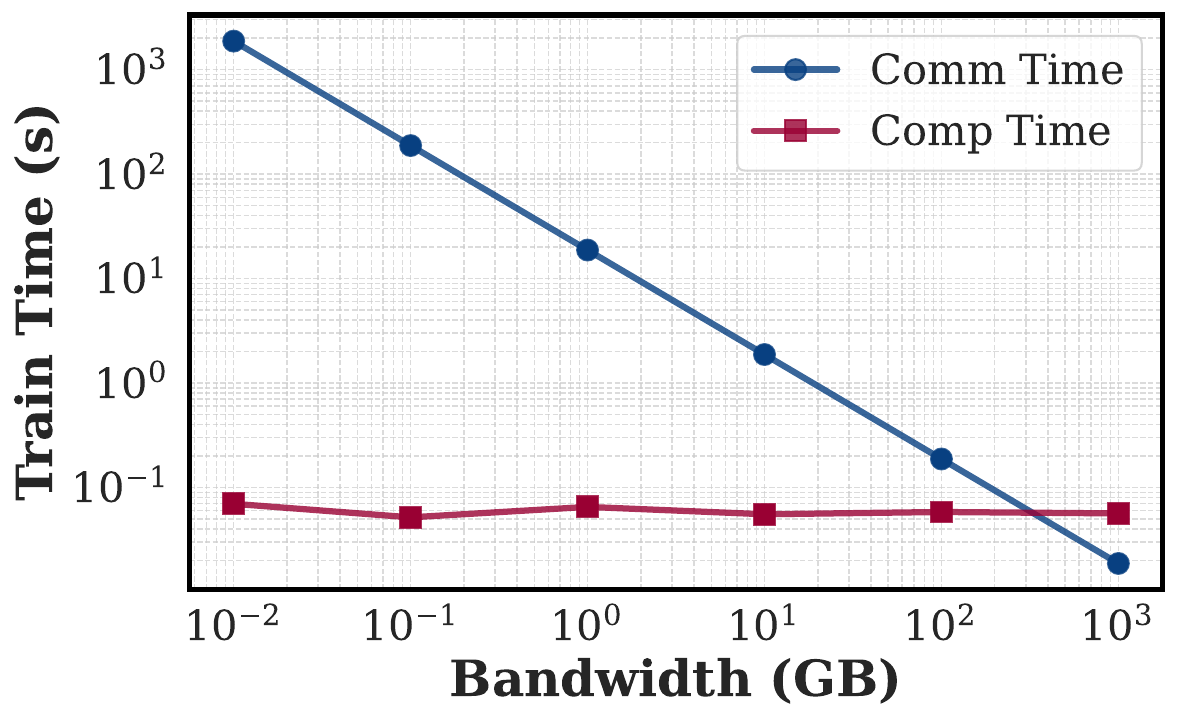}\label{fig:llama-comm-bp-time}
    }
    \subfigure[GPT-2]
    {
    \includegraphics[width=0.46\linewidth]{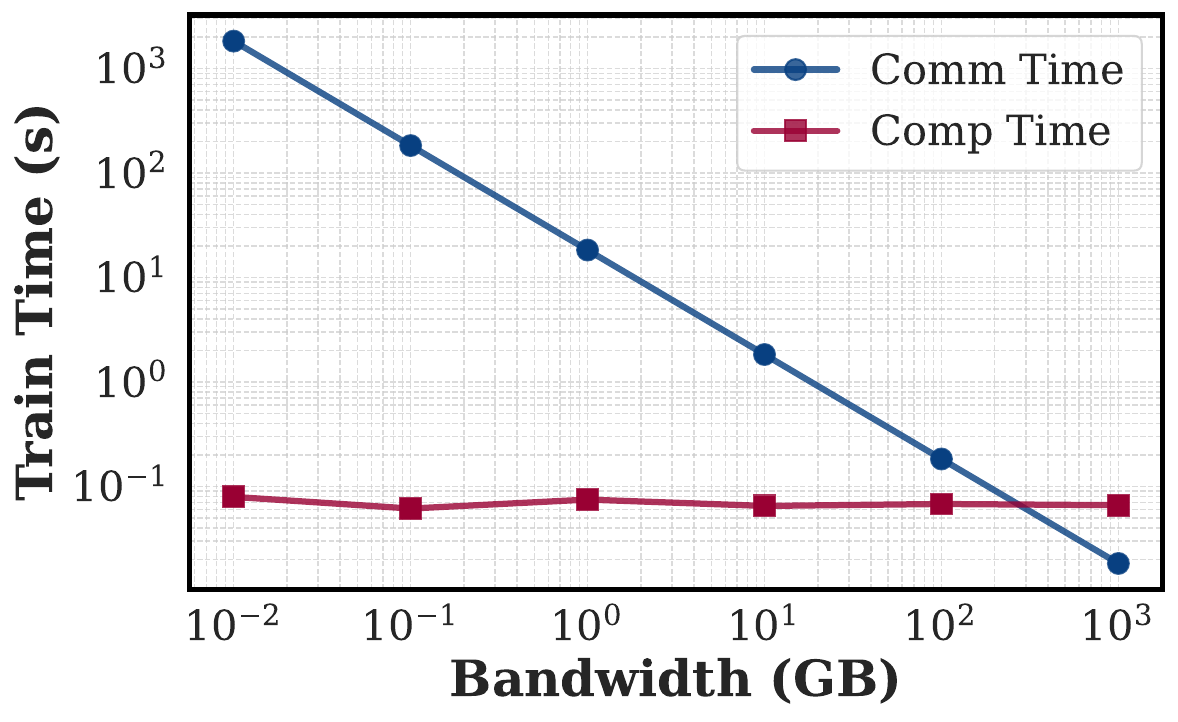}\label{fig:gpt-comm-bp-time}
    }
    \caption{Comparison of communication and computation times under varying bandwidths for GPT-2 and Llama-2.}
\label{fig:comm-bp-time}
\vspace{-0.1cm}
\end{figure}
\vspace{-0.1cm}

Modern deep learning~(DL) models often feature extremely large sizes and immense training datasets. Examples of these large language models~(LLMs) include GPT-2~\cite{gpt2}, GPT-3~\cite{gpt,chatgpt}, Gemini~\cite{reid2024gemini}, LLaMA~\cite{Touvron2023LLaMAOA}, and Mixtral~\cite{jiang2024mixtral}. Training these models incurs very high computational costs. Recent proposals suggest training LLMs across several geographically distributed data centers for scaling purposes~\cite{narayanan2021efficient,ryabinin2023swarm,tang2023fusionai,tang2024fusionllmdecentralizedllmtraining,DecentMOE,cocktailSGD,yuandecentralized,10229032,OpenDiLoCo}.



Moreover, the data exhaustion problem has been a critical issue in the training of large models~\cite{dataexhaust}. The academia and industry predict that high-quality public data will be exhausted before 2026~\cite{villalobos2022will}. Collecting high-quality private data has the privacy concerns (e.g., medical~\cite{llm_medicine} and financial~\cite{wu2023bloomberggpt} data). To address this issue, federated learning (FL) has emerged as a promising solution~\cite{openfedllm,tang2024fusefl,kairouz2019advances}. FL allows multiple parties to collaboratively train a large model without sharing their private data, which is crucial for protecting data privacy and ensuring data security.


However, as shown in Figure~\ref{fig:comm-bp-time}, the low communication bandwidth of 10 Mbps $\sim$ 1 Gbps between different data centers and multiple parties makes the communication time as the main system bottleneck~\cite{ryabinin2023swarm,DeDLOC,DecentMOE,cocktailSGD,yuandecentralized,tang2024fusionllmdecentralizedllmtraining}. For the synchronous stochastic gradient descent (S-SGD)~\cite{Bottou2016OptimizationMF} widely used in distributed training, in each training iteration the gradients are required to be synchronized between different workers, of which the communication significantly time costs. To this end, existing works that accelerate Geo-distributed training mainly focus on scheduling different data centers to conduct Data Parallel (DP) or Pipeline Parallel (PP)~\cite{ryabinin2023swarm,yuandecentralized}, or compressing gradients to reduce the communication overheads~\cite{cocktailSGD,NEURIPS2022_7a43b8eb,tang2024fusionllmdecentralizedllmtraining,tang2023fusionai}. However, the large compression ratio significantly influences the convergence rate, and the effect of communication reduction in the scheduling is limited.


To further reduce the communication overhead, local SGD (LSGD)~\cite{stich2018localsgd,woodworth2020minibatch,10229032} skips some communications by conducting local updates for $H$ times and synchronizing model parameters or pseudo gradients. Consequently, LSGD reduces communication costs by $H\times$ compared to S-SGD. In geo-distributed training, LSGD has been proven to be a practical optimization method to train the LLMs~\cite{OpenDiLoCo}. INTELLECT-1 is the first 10B parameter LLM trained in a decentralized manner with the LSGD~\cite{jaghouar2024intellect1technicalreport}. And it has been shown that LSGD achieves a similar scaling law of model parameters compared to traditional optimizers~\cite{he2024exploring}. Thus, LSGD has gradually become a new paradigm of the geo-distributed training~\cite{OpenDiLoCo,jaghouar2024intellect1technicalreport,openfedllm,douillard2024diloco,298559,qin2024federated,zhuang2023foundation,FederatedScope-LLM,10621164,sani2024photon,sunco2,cheng2024editlocalsgdbasedefficientdistributed}.

However, current LSGD requires \textbf{full synchronization}, which means to synchronize the entire model parameters after $H$ iterations, losing opportunities to overlap communication and computation tasks during backpropagation (BP) that is commonly used to accelerate the training process~\cite{203269,10.1145/3341301.3359642,9488803}. For example, WFBP~\cite{203269} communicates the gradient of each layer instantly after its BP, and overlapped with BP of subsequent layers, thus saving total training time. ASC-WFBP~\cite{9488803} further improves overlapping by simultaneous communications and scheduling. Thus, to further improve the throughputs of geo-distributed training with LSGD, the first challenge is how to design a new optimizer to avoid the limitations of the full synchronization, thus unleashing the optimization opportunities of overlapping communication and computation in LSGD like~\cite{203269,9488803}. The second challenge is how to schedule the communication and computation to reduce time costs while guaranteeing convergence. Thirdly, considering LLMs have enormous parameters and meet the memory wall in existing GPUs~\cite{ren2021zero,Rajbhandari2019ZeROMO}, the new method should not increase the GPU memory costs.

In this work, we design a novel training framework named DreamDDP to significantly accelerate DDP with improved LSGD. Firstly, we proposed \textbf{partial synchronization}, which synchronizes different parts of the model in each iteration, instead of the entire model after $H$ iterations like the loose synchronization in large-scale network systems~\cite{268961}. Thus, it offers opportunities to overlap the communication and computation of LSGD. We theoretically prove that the convergence property of partial synchronization-based local SGD (PLSGD) can be guaranteed as same as LSGD. Secondly, DreamDDP divides the neural networks into different fine-grained $L$ layers and implements \textit{in-place partial parameter synchronization} by launching layer synchronization after its BP, thus introducing no extra GPU memory occupation. Thirdly, we introduce the communication and computation profiler to profile the communication and computation time of each layer. Then, we build an overall time cost model of the FP, BP, and communication process with respect to $L$ layers and $H$ iterations of the partial synchronization and formulate it as an optimization problem. We identify three properties including \textit{optimal hiding}, \textit{delayed CO assignment}, \textit{at-least-one assignment} that can be leveraged to design a depth-first search (DFS) with significantly pruned search space to solve this optimization problem. Lastly, DreamDDP inserts more communication messages (model parameters) to fill the bubble time as long as the extra communication can be hidden by the computation, thus accelerating the training convergence.

Our contributions are summarized as follows.
\begin{itemize}
    \item We propose partial synchronization based local SGD (PLSGD) that relaxes the strong synchronization by layer-wisely decoupling model parameters, which enables the opportunity of overlapping communication and computation in LSGD to improve the training efficiency.
    \item DreamDDP profiles the real-time communication and computation time of fine-grained model layers. Then, we identify three properties leveraged by DreamDDP to design a DFS algorithm with pruned search space to optimize the throughputs of PLSGD.
    \item We theoretically prove that the DreamDDP has the same convergence rate with S-SGD, and empirically show that they have similar convergence speeds with training GPT-2, Llama-2, ResNet-18, and ResNet-50.
    \item We implement DreamDDP atop the commonly used distributed deep learning framework PyTorch Distributed~\cite{PytorchDistributed,ansel2024pytorch} and conduct real-world experiments on two different GPU clusters with 32 GPUs. Experimental results show DreamDDP achieves $1.49-3.91\times$ on iteration time improvement over the SOTA algorithms including LSGD and ASC-WFBP.
\end{itemize}

\section{Data-Parallel Distributed Training}\label{sec:prelimilary}
In this section, we present the preliminaries of the mini-batch SGD, S-SGD and LSGD with full synchronization. Then, we investigate the benefits of LSGD and its limitations on the system design, which severely limits the systematic optimization space of LSGD.

\subsection{Mini-batch SGD}
Given a DL model parameterized with $w \in \mathbb{R}^d$, where $d$ is the model size. Training with a single worker is to minimize the objective function $F(w) \triangleq \mathbb{E}_{x \sim \Dcal} f(w;x)$, with sampling data $x$ from the dataset $\Dcal$. We consider that the objective function is $\beta$-smooth and $\mu$-strongly convex in this work following~\cite{Bottou2016OptimizationMF}. In each $r$-th iteration, the worker computes the gradient of a mini-batch samples $\nabla f(w_r;\xi_r)$, in which $\xi_r$ represents the sampling noise with respect to the data indexes of the current iteration $r$. The model is then updated as $w_{r+1} = w_{r} - \eta_r \nabla f(w_r;\xi_r)$~\cite{Bottou2016OptimizationMF}, where $\eta_r$ is the learning rate.

\subsection{Distributed Synchronous SGD}\label{sec:S-SGD}
We consider $K$ different workers that conduct distributed training with S-SGD~\cite{Bottou2016OptimizationMF,9488803}. In each iteration, the workers sample different data samples from the original dataset $\Dcal$. Then, the workers conduct feedforward (FP) and back-propagation (BP) to calculate the gradients $\nabla f(w_r^k;\xi_r^k)$. Next, workers communicate and average these gradients to obtain $\frac{1}{K} \sum_{k=1}^K \nabla f(w_r^k;\xi_r^k)$. The model is updated as 
\begin{equation}\label{eq:SSGD}
    w_{r+1} = w_{r} - \eta_r \frac{1}{K} \sum_{k=1}^K \nabla f(w_r^k;\xi_r^k).
\end{equation}
Communicating and averaging gradients $\nabla f(w_r^k;\xi_r^k)$ requires extra time, which limits the scalability of distributed training, especially for the low-bandwidth environments~\cite{Asteroid,ye2024galaxy,cocktailSGD,yuandecentralized,Asteroid,ye2024galaxy,ClearMLConsumerGPUs}. The S-SGD with momentum and distributed Adam also follow these process but with some extra variables to maintain. For momentum SGD~\cite{sgdm} and Adam~\cite{KingBa15}, the momentum term~\cite{sgdm} and the preconditioner~\cite{KingBa15} are updated with the averaged gradients on workers.



\subsection{Local SGD}\label{sec:local-SGD}
In LSGD, each $k$-th worker calculates the local gradient $\nabla f(w_r^k;\xi_r^k)$ and directly uses it to update the local model. After $H$ training iterations, workers communicate and average their model parameters as $\frac{1}{K} \sum_{k=1}^K w_{r}^{k}$. This process can be formulated as follows.
\begin{equation}\label{eq:localsgd}
    w_{r+1}^k =
    \begin{cases}
         w_{r}^{k} - \eta_r \nabla f(w_r^k;\xi_r^k), & \text{if} \ r+1 \% H \neq 0  \\
         \frac{1}{K} \sum_{k=1}^K (w_{r}^{k} - \eta_r \nabla f(w_r^k;\xi_r^k)), & \text{if} \ r+1 \% H = 0 
    \end{cases}.
\end{equation}
Different from S-SGD, workers with LSGD have different model parameters during local training, but synchronize them after $H$ iterations.

\textbf{Communication Overheads in Distributed Training.} Fig.~\ref{fig:comm-bp-time} shows that the communication may occupy a large amount of time compared with the computation in the DL model training and the LLM training. When bandwidth is low, training normal DL models requires larger communication time than computation~\cite{ryabinin2023swarm,tang2023fusionai,DecentMOE,cocktailSGD,yuandecentralized,10229032}. For LLMs of enormous parameters, even in a high communication bandwidth environment, the communication occupies a large ratio~\cite{gpt2,deepspeed1,ryabinin2023swarm}.

\textbf{Time Complexity.} We formulate and compare the time complexity of S-SGD and LSGD. Considering the forward and backward time is $t_{FP}$ and $t_{BP}$, and the communication time of the whole model is $t_{COMM}$, the S-SGD and LSGD require whole training time:
\begin{align}
T_{SSGD} & = R \times (t_{FP} + t_{BP} + t_{COMM}), \label{eq:TSSGD}  \\ 
T_{LSGD} & = R \times (t_{FP} + t_{BP}) + R/H \times t_{COMM}. \label{eq:TLSGD}
\end{align}
The ratio of saved time by LSGD can be calculated as $(T_{SSGD} - T_{LSGD})/T_{SSGD} = (1 - 1/H) \times t_{COMM}  /  (t_{FP} + t_{BP} + t_{COMM})$. The larger $H$ means more time reduction. However, large $H$ may cause slow convergence and worse final model performance. And the speedup of LSGD is mainly depended on the ratio of $(t_{FP} + t_{BP} )/ t_{COMM}$.

\begin{figure}[!th]
\vspace{-0.4cm}
\centering
\setlength{\abovecaptionskip}{0.cm}
\setlength{\belowcaptionskip}{-0.2cm} 
    \subfigure[Llama-2]
    {
    \includegraphics[width=0.46\linewidth]{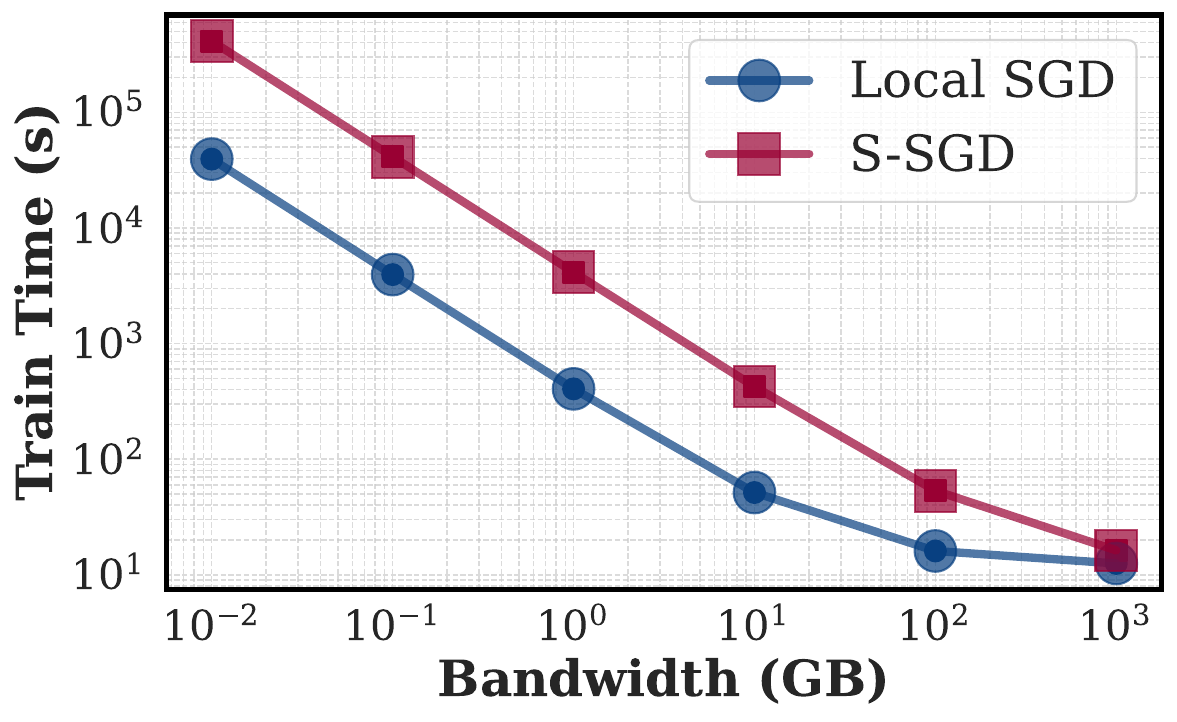}
    }
    \subfigure[GPT-2]
    {
    \includegraphics[width=0.46\linewidth]{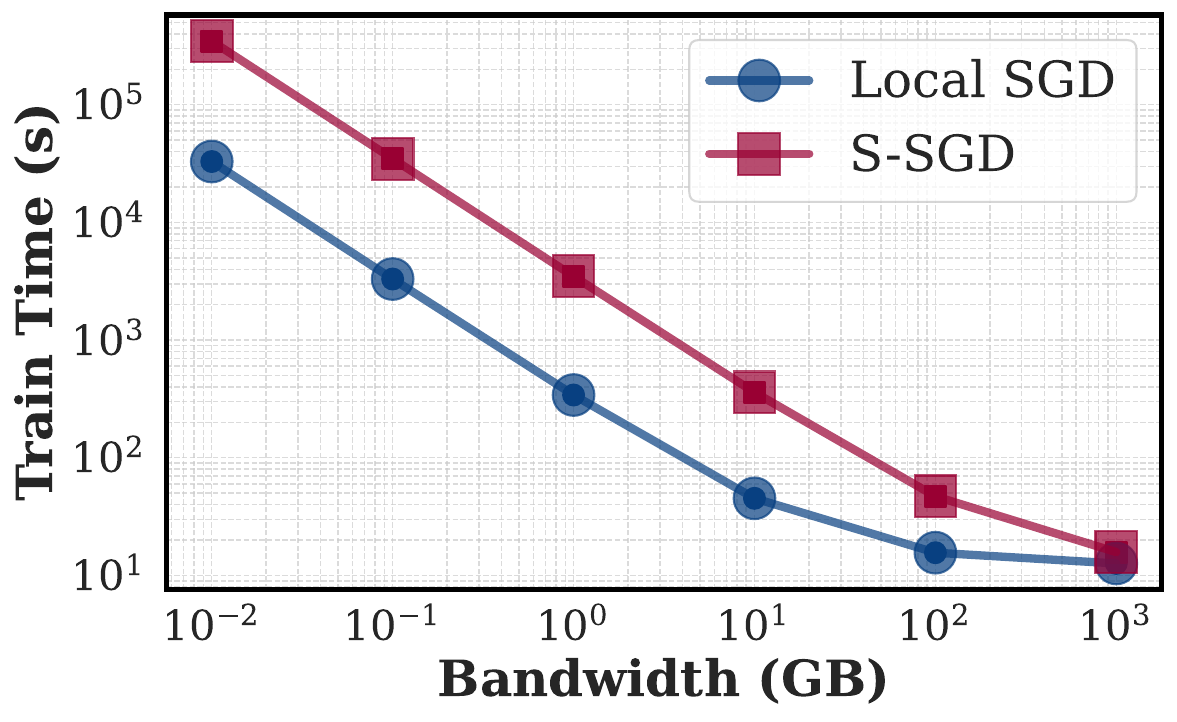}
    }
    \caption{Dissecting training time of GPT-2 and LLaMA-2 with Local SGD and S-SGD under different bandwidth.}
\label{fig:Compare-LSGD-SSGD}
\vspace{-0.1cm}
\end{figure}

\textbf{Benefits of LSGD.} The communication cost of LSGD is $O(KRS(w)/H)$, where the $S(w)$ represents the model size, $R$ the total training iterations. Compared with normal S-SGD with $O(KRS(w))$ communication cost, LSGD reduces it less than $H\times$. And this reduction is not influenced by the number of workers~\cite{shi2019distributed}. Unlike gradient compression, LSGD does not require extra computation time for compression~\cite{agarwal2022utility}. Furthermore, considering the distributed training heterogeneous GPU~\cite{jiang2020unified,guo2022hybrid}, LSGD is very suitable for implementing load balance by assigning different workers with different local training iterations, i.e. different $H_k$. Figure~\ref{fig:Compare-LSGD-SSGD} shows the reduced communication time of LSGD than S-SGD. By setting the local training iterations as $H=5$, the communication time is saved by $5\times$. Because when the communication bandwidth is low, the communication time dominates the whole iteration time. Thus, the total training time is also almost reduced for $5\times$ by LSGD.


\begin{figure*}[!ht]
\vspace{-0.0cm}
\centering
\setlength{\abovecaptionskip}{0.2cm}
\setlength{\belowcaptionskip}{0.2cm} 
    \includegraphics[width=0.95\linewidth]{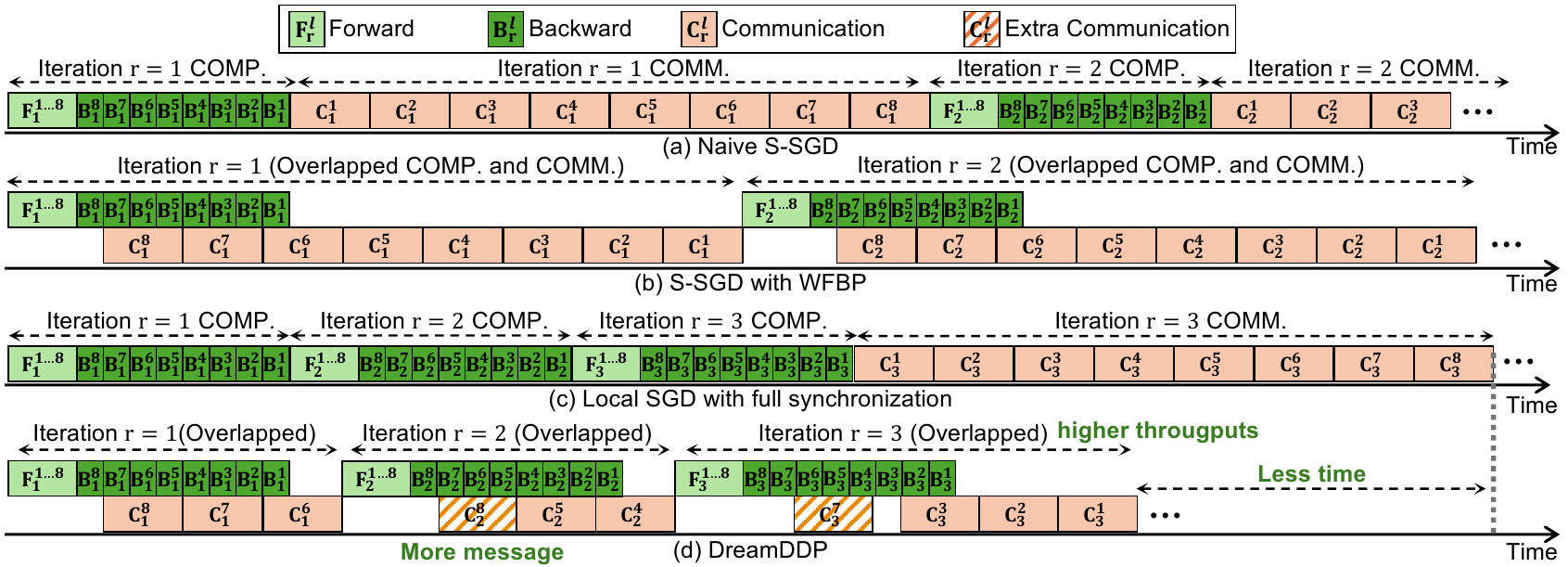}
\vspace{-0.0cm}
\caption{Timelines of different distributed training algorithms. }
\label{fig:TimelineOverall}
\end{figure*}
\vspace{-0.1cm}



\subsection{Challenges of Improving LSGD}
\textbf{Challenges 1: How to remove the hard synchronization point?}
As shown in Fig.~\ref{fig:TimelineOverall}(c), the current design of LSGD schedules BP and communication in different non-overlapping phases, of which the update operation is a \textbf{hard synchronization point}. Specifically, the update must be conducted after the gradient is obtained by BP, while the communication must be conducted after updating. Thus, the BP cannot be overlapped with the communication process, and the GPU or the communication link will stay completely idle at any given time stamp. This loses a substantial performance optimization opportunity~\cite{PytorchDistributed,wang2019adaptive} and becomes a severe intrinsic limitation of LSGD. 

In LSGD, the model divergence generated by the local training influences the convergence~\cite{stich2018localsgd,mcmahan2017communication,wang2019adaptive,mishchenko2022proxskip}. Formally, according to Eq.~\ref{eq:localsgd}, the model divergence is defined as $\Gamma_r = \frac{1}{K} \sum_{k=1}^K ||\wbar_{r} - w_{r}^{k} ||^2$, where $\wbar_{r} = \frac{1}{K} \sum_{k=1}^K w_{r}^{k}$  is the average of the model parameters. In S-SGD, the model of each worker $w_r^k$ is $\wbar_{r}$ due to the strong synchronization of gradients. Thus, we can say that $\Gamma_r$ represents the approximation error of LSGD to S-SGD. Too large $\wbar_{r}$ influences the gradient estimation. Thus, previous observations propose that the LSGD should guarantee that workers start at a synchronization point to avoid too much model divergence~\cite{mcmahan2017communication,kairouz2021advances,stich2018localsgd}. And more frequent model synchronization (less $H$) makes the LSGD more close to S-SGD, thus having better convergence speed~\cite{stich2018localsgd,wang2019adaptive,woodworth2020minibatch}.

\textbf{Challenges 2: How to overlapping parameter synchronization and computation without extra GPU memory and not influencing convergence?} 
To overlap communication and computation, we need to consider when and what to communicate. Different from the overlapping in S-SGD like WFBP~\cite{203269,9488803} where workers synchronize gradients, in LSGD, workers synchronize model parameters. Considering LLMs have enormous parameters and meet the memory wall in existing GPUs~\cite{ren2021zero,Rajbhandari2019ZeROMO}, the new method should not increase the GPU memory costs. Thus, model parameters are better to be synchronized in-place. However, communicating model parameters and averaging them in place will directly modify the model parameters. The FP and BP processes are dependent on the model parameters. Thus, in-place synchronization might influence the FP and BP, leading to incorrect feature and gradient computation.

\textbf{Challenges 3: How to schedule communication and computation?}
The scheduling of overlapping in S-SGD like WFBP~\cite{203269,9488803} is direct. All gradients in S-SGD are required to be synchronized in each training iteration. Thus, the gradient communication of each layer can be launched instantly after BP (when not considering merging small tensors~\cite{9155269}). However, in LSGD, model parameters are synchronized after $H$ iterations. For its overlapping, given $L$ layers, we need to assign communicating different layers in different iterations. The search space of this problem is extremely large as $H^L$. Such a large search space makes it difficult to find an optimal solution.

\vspace{-0.1cm}
\section{DreamDDP Design}\label{sec:DreamDDP}


DreamDDP is designed to address the above challenges to improve the training efficiency of LSGD. The overall design of DreamDDP is shown in Fig.~\ref{fig:system-design}.

Firstly, we propose \textbf{partial synchronization} based LSGD to conduct DDP training, which synchronizes different parts of the model in different iterations, offering more opportunities to overlap the communication and computation of LSGD (Section~\ref{sec:partialLSGD}). 



Secondly, DreamDDP divides the neural networks into different fine-grained $L$ layers. Built upon the partial synchronization, we analyze the dependency of the BP, FP and communication of LSGD and how to implement the overlapping with in-place partial parameter synchronization. We find that the parameter synchronization must be launched after the BP of its corresponding layer, otherwise the in-place synchronization will influence the computation of FP and BP. Thus, DreamDDP mainly overlaps the communication of one layer with the BP of its subsequent layers (Section~\ref{sec:overlap}).


Thirdly, DreamDDP integrates the profiler to profile the communication and computation time of each layer according to the given modules and GPU and bandwidth configuration. Then, we build an overall time cost model of the FP, BP and communication process with respect to $L$ layers and $H$ iterations of the partial synchronization and formulate it as an optimization problem. We identify three properties including \textit{optimal hiding}, \textit{delayed CO assignment}, \textit{at-least-one assignment} that can be leveraged to design a DFS algorithm with significantly pruned search space to solve this optimization problem (Section~\ref{sec:schedule}).

After the scheduling, we find that there may exist some idle bubbles of the communication bandwidth. Thus, we exploit them to insert more communication messages (model parameters) to fill the bubble time as long as the extra communication can be completely hidden by the computation. Thus, the synchronization frequency of these layers that fill the bubble time is higher, which accelerates the training convergence (Section~\ref{sec:bubble}).

\begin{figure}[!ht]
\vspace{-0.0cm}
\centering
\setlength{\abovecaptionskip}{0.2cm}
\setlength{\belowcaptionskip}{0.2cm} 
    \includegraphics[width=0.95\linewidth]{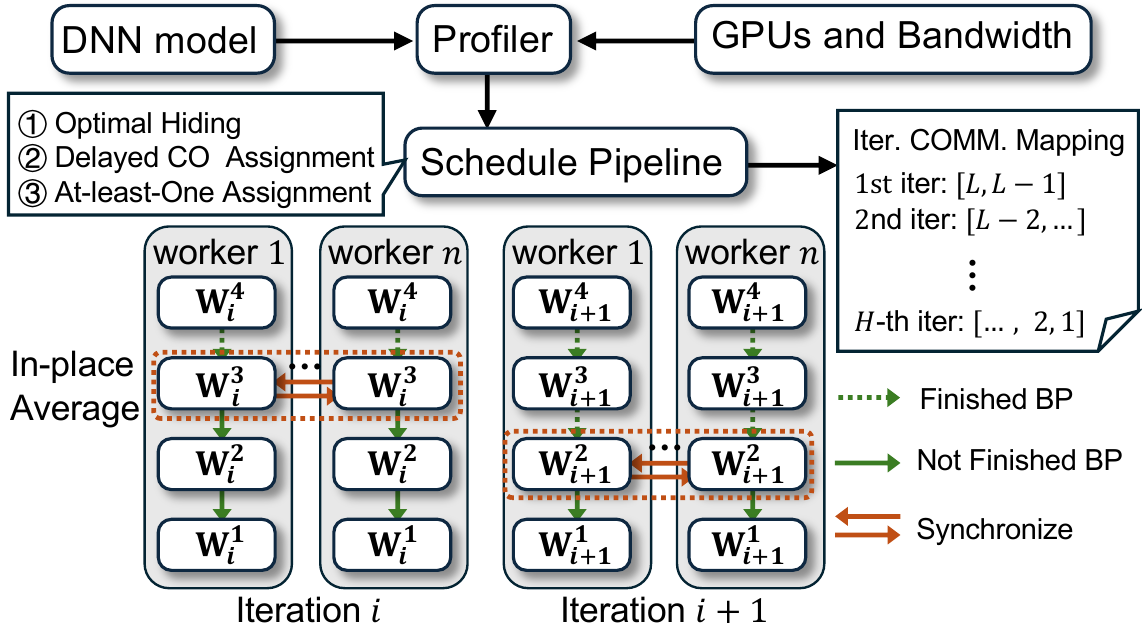}
    \caption{An overview of how DreamDDP schedules and launches the overlapping of backward propagation and parameter synchronization. In different iterations, DreamDDP synchronizes model parameters according to the scheduled iteration-layer mappings.}
    \label{fig:system-design}
\end{figure}


\subsection{Partial Synchronization: Soft Model  Synchronization Point}\label{sec:partialLSGD}

We call the model averaging in the original LSGD with all parameters as the full synchronization as shown in Fig.~\ref{fig:TimelineOverall}(a). The strict synchronization of LSGD makes the training time of $H$ iterations cannot be utilized to overlap the communication like WFBP does (Fig.~\ref{fig:TimelineOverall}(b)).

\begin{algorithm}[!t]
\caption{LSGD with partial synchronization}\label{algo:partial-LSGD}
\small
\textbf{Input: } Initialized model $w_0$, dataset $\Dcal$, workers $\{ 1,...,K\}$, total iteration $R$, learning rate $\eta$, synchronization frequency $H$. \\
\textbf{Output: } Final trained model $w_R$. 
\begin{algorithmic}[1]
    \For{$r =1,...,R$}
    \For{worker $k \in \{ 1,...,K\}$ in parallel}
        \State $\nabla f(w_r^k;\xi_r^k) \leftarrow$ FP and BP; \Comment{Obtain gradients}
        \State $w_{r+1/2}^k = w_{r}^{k,l} - \eta_r \nabla^l f(w_r^k;\xi_r^k)$; \Comment{Update model}
        \For{layer $l\in \Lcal$}
        \If{$r+1 \% H = H_l$}   \Comment{Partial synchronization}
            \State $w_{r+1}^{k,l} = \frac{1}{K} \sum_{k\in \Mcal} w_{r+1/2}^k $;
        \Else
            \State $w_{r+1}^{k,l} = w_{r+1/2}^k $;
        \EndIf            
    \EndFor
    \EndFor
    \EndFor
    \State Return $w_{R}$;
\end{algorithmic}
\end{algorithm}

To hide more communication of LSGD within computation, we propose partial synchronization. Specifically, we split the whole model into different segments $w_l$ according to the set of layer indexes  $\Lcal$ and uniformly assign them to different local training iterations for communicating as shown in Fig.~\ref{fig:TimelineOverall}(d). Now, compared with Fig.~\ref{fig:TimelineOverall}(c), partial synchronization provides more opportunity to overlap communicating different layers within the computation.


Now, separated layers can be synchronized within the computation time. By synchronizing layers after their backward propagation, the overlapping is implemented with guaranteeing parameter consistency. Besides, there is no extra GPU memory occupation. Formally, we give the process of LSGD with partial synchronization in Algorithm~\ref{algo:partial-LSGD} and the following definition.

\begin{definition}\label{def:partial}
(\textbf{Partial Synchronization.}) Assuming that the layer indexes $\Lcal=\{ 1, ...,L\}$ of the model are sequentially split to $H$ disjoint subsets $\Lcal_{1:H}=\{ \Lcal_1, ...,\Lcal_H \}$. Given the synchronization period $H $, the updating form of LSGD with partial synchronization is 
\begin{small}
    \begin{equation}\label{eq:partial-LSGD}
    w_{r+1}^{k,l} =
    \begin{cases}
         w_{r}^{k,l} - \eta_r \nabla^l f(w_r^k;\xi_r^k), & \text{if} \ r+1 \% H \neq H_l  \\
         \frac{1}{K} \sum_{k=1}^K (w_{r}^{k,l} - \eta_r \nabla^l f(w_r^k;\xi_r^k)), & \text{if} \ r+1 \% H =  H_l
    \end{cases},
\end{equation}
\end{small}
in which $l\in\Lcal$, and $H_l=\Ical(l)$ is the index function that indicates which subset of $\Lcal_{1:H}$ the $l$-th layer belongs to. Note that the gradient is calculated with respect to the whole model $w_r^k$. We use $ \nabla^l f(w_r^k;\xi_r^k)$ to represent the gradient of the $l$-th layer.
\end{definition}
\begin{example}\label{ex:ENP}
(\textbf{Equal-Number Partition.}) Given $H=2/L$, we have a partition as of $\Lcal$ as $\Lcal_1 = \{1,2\}, \Lcal_2 = \{3,4\},...,\Lcal_H = \{L-1,L\}$, which equally assigns $L$ layers to $H$ sets.  We write $\Lcal_{1:H} = \{ \Lcal_1, \Lcal_2,...,\Lcal_H\}$  for simplicity. The Definition~\ref{def:partial} and Algorithm~\ref{algo:partial-LSGD} show that the averaging periods of layer sets are different. Similar to full synchronization, within $H$ iterations, model parameters are all synchronized once. In other words, the accumulated divergence $\Gamma_r$ in full synchronization is cleared after $H$ iterations, while the layer-wise $\Gamma_r^l$ is cleared after $H$ iterations in partial synchronization. 
\end{example}

Note that the whole model divergence $\Gamma_r$ always exists during the entire training process. This is somehow contrary to full synchronization~\cite{kairouz2021advances,stich2018localsgd} that requires to completely clear the model divergence. However, we empirically show that LSGD with partial synchronization has less model divergence $\Gamma_r$, which benefits convergence. Fig.~\ref{fig:PSvsFull} shows using LSGD with partial and full synchronization to train ResNet-18 with 32 workers. Fig.~\ref{fig:PSvsFull-Convergence} shows that the convergence speed of full synchronization is clearly slower than partial synchronization. And the model divergence of full synchronization is periodically accumulated and cleared, while the partial synchronization maintains the divergence at a lower degree.


\textbf{Reduced Divergence Amplification.} Intuitively, we  show that the larger degree of model divergence in full synchronization comes from the layer-wise divergence amplification. Considering that intermediate features $z^{k,l}(x) = f^{k,l}(w^{k,1:l},x)$ on client $k$, where $f^{k,l}$ represents the mapping $x \to z$ through the former $1,...,l$ layers, the larger divergence between $w^{k,1:l}$ introduces larger divergence between $z^{k,l}$. And the subsequent layer features are dependent on the former intermediate features, which implies that the divergence will be amplified layer by layer. Thus, during the late stage of one period in full synchronization, the total divergence is amplified to a large degree, as shown in Fig.~\ref{fig:PSvsFull-Divergence}. However, the partial synchronization frequently eliminates the divergence of some layers, which alleviates the layer-wise divergence amplification effect.

\begin{figure}[!ht]
\vspace{-0.0cm}
\centering
\setlength{\abovecaptionskip}{0.2cm}
\setlength{\belowcaptionskip}{0.2cm} 
    \subfigure[Test accuracy]
    {
    \includegraphics[width=0.46\linewidth]{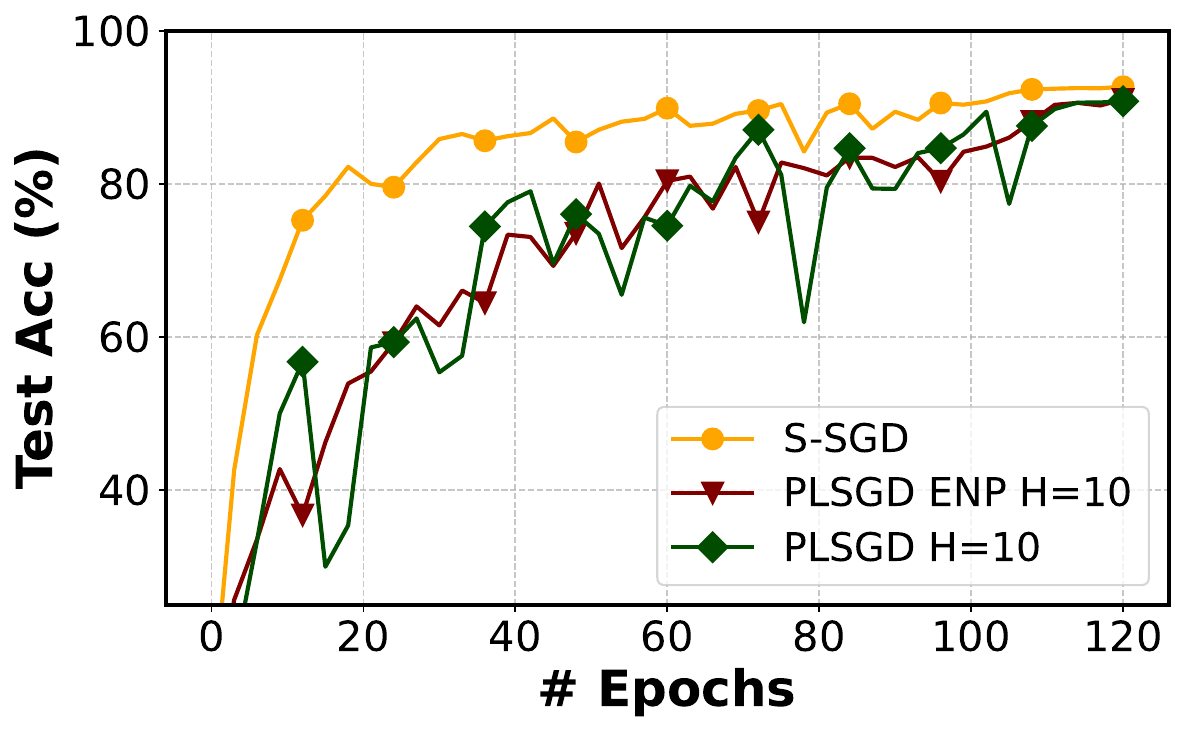}\label{fig:PSvsFull-Convergence}
    }
    \subfigure[Model divergence]
    {
    \includegraphics[width=0.46\linewidth]{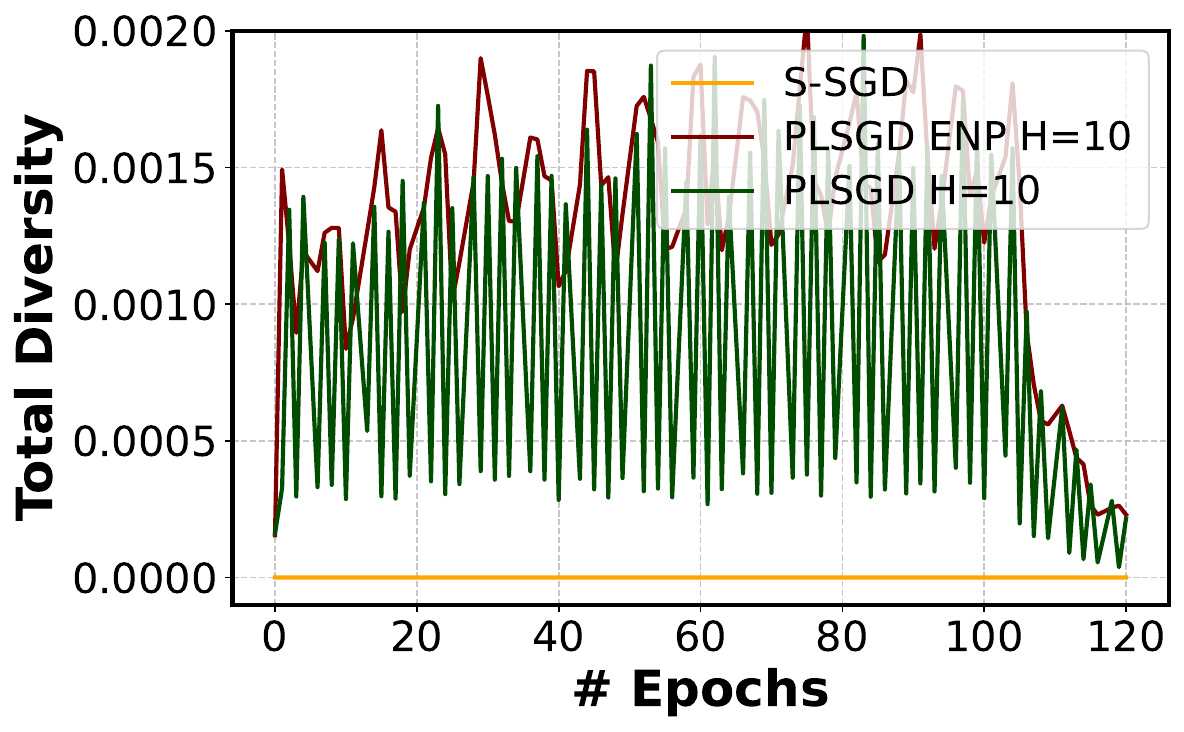}\label{fig:PSvsFull-Divergence}
    }
\vspace{-0.0cm}
\caption{Training ResNet-18 on 32 workers.}
\label{fig:PSvsFull}
\vspace{-0.1cm}
\end{figure}



Now, we theoretically prove that the LSGD with partial synchronization has the same convergence rate as the S-SGD~\cite{Bottou2016OptimizationMF}.

\begin{theorem}
\label{theo:converge-bound}
    Let $f$ be $\beta$-smooth and $\mu$-strongly convex, $\mathbb{E}_{\xi_r^k}||\nabla f(w_r^k;\xi_r^k)-\nabla f(w_r^k)||^2\leq \sigma^2$, $\mathbb{E}_{\xi_r^k}||\nabla f(w_r^k;\xi_r^k)||^2\leq G^2$, for $r=0,1,...,R-1$, where $\{w_r^k\}_{r=0}^R$ for $k\in[K]$ are generated according to (\ref{eq:partial-LSGD}) with $H_l\leq H, l=1,2,...,L$ and for stepsizes $\eta_r=\frac{4}{\mu(a+r)}$ with shift parameter $a>\max\{16\kappa,H\}$, for $\kappa=\frac{L}{\mu}$. Then
    \begin{equation}
    \small
    \begin{split}
        \mathbb{E}f(\hat{w}_R)-f^*&\leq \frac{\mu a^3}{2S_R}||w_0-w^*||^2\\
        &+\frac{4R(R+2a)}{\mu K S_R}\sigma^2+\frac{256R}{\mu^2S_R}G^2H^2\beta
    \end{split}
    \end{equation}
    where $\hat{w}_R=\frac{1}{KS_R}\sum_{k=1}^K\sum_{r=0}^{R-1}p_rw_r^k$, for $p_r=(a+r)^2$ and $S_R=\sum_{r=0}^{R-1}p_r\geq\frac{1}{3}R^3$.
\end{theorem}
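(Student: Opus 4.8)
The plan is to adapt the virtual-averaged-sequence technique standard in local SGD analysis \cite{stich2018localsgd}, arranging things so that the layer-wise character of partial synchronization enters \emph{only} through the divergence bound. First I would introduce the averaged iterate $\wbar_r = \frac{1}{K}\sum_{k=1}^K w_r^k$ and record the key simplification: when the per-layer update (\ref{eq:partial-LSGD}) is averaged over the $K$ workers, both branches of the case distinction collapse to the \emph{same} expression, because averaging an already-averaged (synchronized) block leaves it unchanged. Hence, independently of which layers are synchronized at step $r$, the averaged sequence obeys the clean mini-batch recursion
\[
\wbar_{r+1} = \wbar_r - \eta_r\,\tfrac{1}{K}\textstyle\sum_{k=1}^K \nabla f(w_r^k;\xi_r^k),
\]
where the gradients are evaluated at the \emph{differing} local points. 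This reduces the whole problem to controlling how far each $w_r^k$ strays from $\wbar_r$, i.e. the divergence $\Gamma_r=\frac1K\sum_k\|\wbar_r-w_r^k\|^2$ already defined.

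\textbf{One-step recursion.} Second, I would derive the usual contraction for $\mathbb{E}\|\wbar_{r+1}-w^*\|^2$. Expanding the square, taking conditional expectation, and using (i) the bounded-variance hypothesis to handle the stochastic noise (yielding $\eta_r^2\sigma^2/K$), (ii) $\mu$-strong convexity to extract the factor $(1-\mu\eta_r)$ together with a negative suboptimality term, and (iii) $\beta$-smoothness to pass from $\nabla f(w_r^k)$ to $\nabla f(\wbar_r)$ at the cost of a term proportional to $\Gamma_r$, one reaches
\[
\mathbb{E}\|\wbar_{r+1}-w^*\|^2 \le (1-\mu\eta_r)\,\mathbb{E}\|\wbar_r-w^*\|^2 - \eta_r\bigl(\mathbb{E}f(\wbar_r)-f^*\bigr) + \tfrac{\eta_r^2\sigma^2}{K} + c\,\eta_r\beta\,\Gamma_r,
\]
with $c$ an absolute constant.

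\textbf{Divergence bound (the main obstacle).} Third---and this is the step genuinely specific to partial synchronization, where I expect the real difficulty---I would bound $\Gamma_r$. The structural fact encoded by the hypothesis $H_l\le H$ is that every layer is synchronized at least once in any window of $H$ consecutive iterations, so each local block $w_r^{k,l}$ has been reset to its average within the last $H$ steps. Writing $w_r^{k,l}-\wbar_r^l$ as the accumulated difference of gradient steps taken since that block's most recent synchronization, aggregating over $l\in\Lcal$, and invoking the uniform gradient bound $\|\nabla f\|\le G$ together with monotonicity of the step sizes, I would obtain $\Gamma_r \le 4\eta_r^2 H^2 G^2$ up to absolute constants. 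The subtlety, absent in full synchronization, is that distinct blocks are reset at distinct times, so the argument must measure each block against its \emph{own} reset point and only then sum; the payoff is that the worst-case staleness is still $H$, which is precisely why the same $H^2$ dependence as full LSGD survives.

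\textbf{Assembly.} Finally, I would substitute the divergence bound into the recursion, rearrange to isolate $\eta_r(\mathbb{E}f(\wbar_r)-f^*)$, multiply through by the weights $p_r=(a+r)^2$ (equivalently by $p_r/\eta_r$), and sum over $r=0,\dots,R-1$. The choice $\eta_r=\frac{4}{\mu(a+r)}$ is tuned exactly so that the coefficients satisfy $\frac{p_r(1-\mu\eta_r)}{\eta_r}\le\frac{p_{r-1}}{\eta_{r-1}}$, making the distance terms telescope into the single initial contribution $\frac{\mu a^3}{2S_R}\|w_0-w^*\|^2$. The weighted noise term sums as $\frac{\sigma^2}{K}\sum_r p_r\eta_r = O\!\bigl(\frac{R(R+2a)}{\mu K}\sigma^2\bigr)$ and the weighted divergence term as $c\beta\sum_r p_r\cdot 4\eta_r^2H^2G^2 = O\!\bigl(\frac{R}{\mu^2}G^2H^2\beta\bigr)$, since $p_r\eta_r^2$ is constant in $r$; these match the three terms in the statement. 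Dividing by $S_R$ and applying Jensen's inequality to the convex $f$---using $\hat w_R=\frac{1}{S_R}\sum_r p_r\wbar_r$---turns $\frac{1}{S_R}\sum_r p_r(\mathbb{E}f(\wbar_r)-f^*)$ into the claimed bound on $\mathbb{E}f(\hat w_R)-f^*$, with $S_R\ge\frac13 R^3$ used only to simplify the final rate. The requirement $a>\max\{16\kappa,H\}$ is what keeps the step sizes small enough for the contraction constants and the telescoping inequality above to remain valid throughout.
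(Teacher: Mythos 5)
Your proposal follows essentially the same route as the paper's proof: the averaged-iterate observation and one-step recursion correspond to the paper's Lemma~\ref{lemma:b1}, your per-layer staleness argument giving $\Gamma_r \le 4\eta_r^2H^2G^2$ (measuring each block from its own last synchronization point) is exactly the paper's Lemma~\ref{lemma:gap}, and your $p_r=(a+r)^2$-weighted telescoping followed by Jensen's inequality is the paper's Lemma~\ref{lemma:supermartingale} plus its final assembly. The structure and even the key constants match, so there are no substantive differences to report.
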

\noindent \textbf{Proof Sketch.} We follow~\cite{Bottou2016OptimizationMF} to assume the gradient sampling variance is bounded as $\mathbb{E}_{\xi_r^k}||\nabla f(w_r^k;\xi_r^k)-\nabla f(w_r^k)||^2\leq\sigma^2$ for any worker $k\in[K]$ and iteration $ r\in[R]$, which is a general assumption in analysis of convergence. Thus, the distributed training with multiple workers helps reduce the gradient variance as  $\mathbb{E}||g_r-\Bar{g}_r||^2\leq\frac{\sigma^2}{K}$. And the gradient magnitude is upper bounded as  $\mathbb{E}||\nabla f(w_r^k;\xi_r^k)||^2\leq G^2$. Then, we define a virtual state sequence $\Bar{w}_r=\frac{1}{K}\sum_{k=1}^Kw_r^k$ and gradient sequences  $g_r=\frac{1}{K}\sum_{k=1}^K\nabla f(w_r^k;\xi_r^k), \Bar{g}_r=\frac{1}{K}\sum_{k=1}^K\nabla f(w_r^k)$. The detailed proof is provided in Appendix~\ref{apdx:proof}.

\begin{corollary}\label{coro:convergence}
    Let $\hat{w}_R$ be defined as in Theorem~\ref{theo:converge-bound}, for parameter $a=\max\{16\kappa,H\}$. Then
    \small
    \begin{equation*}
    \small
    \Ebb f(\hat{w}_R)-f^*=\mathcal{O}\Big(\frac{\kappa H^2}{\mu R^2}+\frac{\kappa^3+H^3}{\mu R^3}\Big)G^2 +\mathcal{O}\Big(\frac{1}{\mu K R}+\frac{\kappa+H}{\mu KR^2}\Big)\sigma^2.
    \end{equation*}
\end{corollary}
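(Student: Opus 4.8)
The plan is to derive the corollary as a direct asymptotic specialization of Theorem~\ref{theo:converge-bound}: I substitute the prescribed shift $a=\max\{16\kappa,H\}$ together with the lower bound $S_R\geq\frac{1}{3}R^3$ into the three-term bound, then collect the $\kappa$- and $H$-dependence. First I replace $1/S_R$ by $3/R^3$ throughout, which upper-bounds all three terms simultaneously. Since $a=\max\{16\kappa,H\}=\Theta(\kappa+H)$, I use $a=\mathcal{O}(\kappa+H)$ and $a^3=\mathcal{O}(\kappa^3+H^3)$ to make the dependence explicit.

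Next I match the three terms to the two $\mathcal{O}$-groups in the claim. The variance term $\frac{4R(R+2a)}{\mu K S_R}\sigma^2$ becomes $\mathcal{O}\big(\frac{R+a}{\mu K R^2}\big)\sigma^2=\mathcal{O}\big(\frac{1}{\mu K R}+\frac{\kappa+H}{\mu K R^2}\big)\sigma^2$ after splitting $R+2a$, producing the entire $\sigma^2$-group. The local-drift term $\frac{256R}{\mu^2 S_R}G^2H^2\beta$ becomes $\mathcal{O}\big(\frac{\beta H^2}{\mu^2 R^2}\big)G^2=\mathcal{O}\big(\frac{\kappa H^2}{\mu R^2}\big)G^2$ once I identify $\beta/\mu$ with the condition number $\kappa$; this is the leading piece of the $G^2$-group.

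The remaining piece is the initialization term $\frac{\mu a^3}{2 S_R}\|w_0-w^*\|^2$, which the corollary reports proportional to $G^2/\mu$ rather than to $\|w_0-w^*\|^2$. To close this gap I convert the distance to optimum into a gradient bound: $\mu$-strong convexity gives $\mu\|w_0-w^*\|\leq\|\nabla f(w_0)\|$, and Jensen applied to $\mathbb{E}\|\nabla f(w_0;\xi)\|^2\leq G^2$ gives $\|\nabla f(w_0)\|\leq G$, so $\|w_0-w^*\|^2\leq G^2/\mu^2$. Substituting yields $\mathcal{O}\big(\frac{a^3}{\mu R^3}\big)G^2=\mathcal{O}\big(\frac{\kappa^3+H^3}{\mu R^3}\big)G^2$, which completes the $G^2$-group; summing the three contributions gives exactly the stated rate.

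The step I would treat most carefully is this last conversion, since every other manipulation is pure substitution and constant-tracking absorbed by $\mathcal{O}(\cdot)$: the reduction hinges on the strongly-convex inequality $\mu\|w_0-w^*\|\leq\|\nabla f(w_0)\|\leq G$, so I must verify that the uniform gradient bound is invoked at the true (full) gradient $\nabla f(w_0)$, which indeed follows from the assumed second-moment bound by Jensen. I would also note the harmless slack between the theorem's requirement $a>\max\{16\kappa,H\}$ and the corollary's choice $a=\max\{16\kappa,H\}$, which does not affect the $\Theta(\kappa+H)$ scaling.
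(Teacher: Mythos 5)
Your proposal is correct and matches the intended derivation: the corollary follows from Theorem~\ref{theo:converge-bound} by substituting $S_R\geq\frac{1}{3}R^3$ and $a=\max\{16\kappa,H\}=\Theta(\kappa+H)$, with the only non-routine step being exactly the one you isolate, namely converting the initialization term via $\mu\|w_0-w^*\|\leq\|\nabla f(w_0)\|\leq G$ (strong convexity plus Jensen applied to the second-moment bound at $r=0$, where all workers share $w_0$), so that $\frac{\mu a^3}{2S_R}\|w_0-w^*\|^2=\mathcal{O}\bigl(\frac{\kappa^3+H^3}{\mu R^3}\bigr)G^2$. Your handling of the paper's notational slip ($\kappa=\beta/\mu$, not $L/\mu$ with $L$ the layer count) and of the boundary slack between $a>\max\{16\kappa,H\}$ and $a=\max\{16\kappa,H\}$ is also appropriate.
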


\noindent \textbf{Remark.} Theorem~\ref{theo:converge-bound} and Corollary~\ref{coro:convergence} show that the LSGD with partial synchronization has the same convergence rate as S-SGD~\cite{Bottou2016OptimizationMF,woodworth2020minibatch} as $\mathcal{O}(1/R)$. Small $H$ can help accelerate the convergence of other terms containing $\mathcal{O}(1/R^2)$ or $\mathcal{O}(1/R^3)$, which are not dominant in the convergence bound with respect to the training iterations $R$.




\textbf{Wall-clock Time Complexity Analysis.} Following definitions in Section~\ref{sec:prelimilary}, we analyze and compare the time complexity of one-iteration overlapped LSGD and partial synchronization. With Definition~\ref{def:partial}, the new training time of LSGD with partial synchronization is
\begin{equation}\label{eq:T-PartLSGD}
\begin{split}
    T_{P-LSGD} = & R \times t_{FP} +  \frac{R}{H} \sum_{h=1}^{H}(t_{BP}^{\Lcal_{1:h-1}} + t_{BP}^{h_0} \\
    & + \text{max}(t_{BP}^{\Lcal_{h:H}}-t_{BP}^{h_0}, t_{COMM}^{\Lcal_h})),
\end{split}
\end{equation}
in which, $h_0$ represents the first layer in set $\Lcal_h$, $t_{BP}^{\Lcal_{1:h-1}}=\sum_{i=1}^{h-1} (\sum_{l\in \Lcal_i} t_{BP}^l)$ is the backward time of layers in $\Lcal_1,...,\Lcal_h$,  $t_{BP}^{\Lcal_{h:H}}=\sum_{i=h}^H (\sum_{l\in \Lcal_i} t_{BP}^l)$ is the backward time of layers in $\Lcal_{h+1},...,\Lcal_H$, $t_{COMM}^{\Lcal_i}=\tau_{COMM}^{h_{| \Lcal_i |-1}} + t_{COMM}^{h_{| \Lcal_i |-1}} - \tau_{COMM}^{h_0}$ is the communication time of layers in $\Lcal_{i}$, where $\tau_{COMM}^{h_j} = \max(t_{BP}^{h_{0:j}}, \tau_{COMM}^{h_{j-1}} + t_{COMM}^{h_{j-1}})$ defines the timestamp for the communication time of layer ${h_j}$. The expression $t_{BP}^{h_{0:j}}$ indicates the completion timestamp for the backward time of layer ${h_j}$, and $\tau_{COMM}^{h_{j-1}} + t_{COMM}^{h_{j-1}}$ marks the completion timestamp for the communication time of layer $h_{i-1}$. The term $t_{BP}^{\Lcal_{h:H}}-t_{BP}^{h_0}$ comes from that the communication of the first layer in the set $\Lcal_h$ must wait for the backward to avoid simultaneous parameter accessing.






\subsection{Overlapping Communication and Computation}\label{sec:overlap}

\begin{figure}[!ht]
\vspace{-0.0cm}
\centering
\setlength{\abovecaptionskip}{0.2cm}
\setlength{\belowcaptionskip}{0.2cm} 
\includegraphics[width=0.95\linewidth]{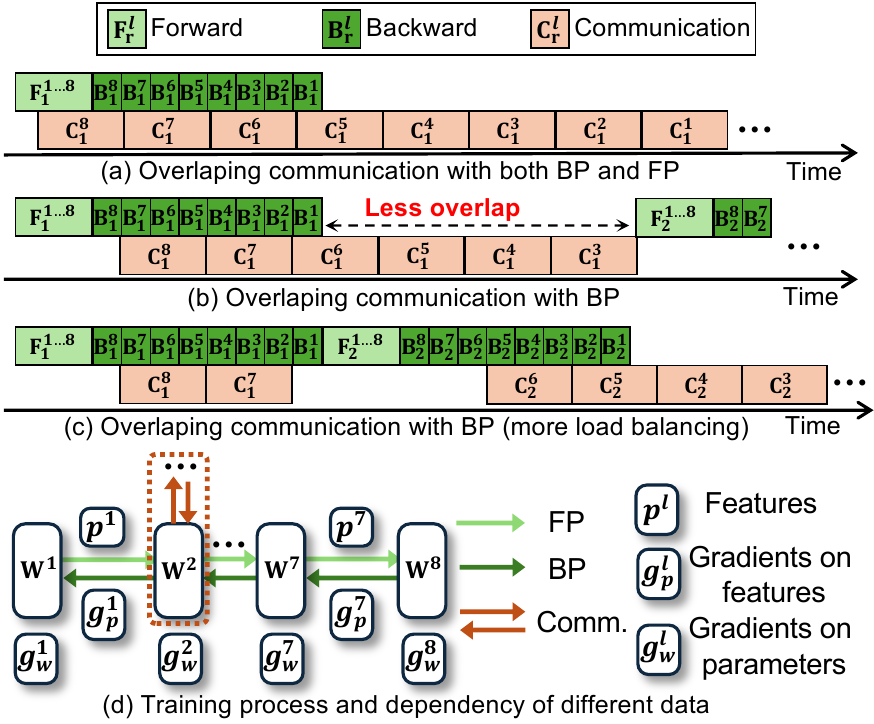}
\caption{The disection of the dependency and overlap opportunity between communication and computation of LSGD.}
\label{fig:overlapping}
\end{figure}

With the partial synchronization, Theorem~\ref{theo:converge-bound} shows that the convergence rate of partial synchronization-based LSGD (PLSGD) has the same convergence rate as LSGD. However, the analysis and Theorem~\ref{theo:converge-bound} in Section~\ref{sec:partialLSGD} do not provide the details of how to overlap communication and computation. In this section, we disect the dependency between communication and computation of LSGD. With the principle of introducing no extra GPU memory occupation, we propose to overlap the communication of layers with the BP of its corresponding layer. 

Figure~\ref{fig:overlapping} (d) shows the FP, BP and the parameter synchronization of LSGD. In LSGD, workers synchronize and average the model parameters $W$ instead of the gradients on the parameters $g_W$ in S-SGD. The FP and BP computation of each layer is dependent on the layer parameters $W_l$ and $g_{W_l}$. Thus, in-place overlapping communication of layers with the FP computation is not feasible because this will lead to incorrect FP and BP computation. Thus, while overlapping communication with the FP computation helps to hide more communication time as shown in Figure~\ref{fig:overlapping} (a), it is not feasible considering that we need in-place averaging of parameters $W$ to avoid extra memory occupation. Besides, considering that the FP normally occupies less time that BP, the communication time is not dominated by the FP time. 

Given $L$ layers that are required to be allocated to $H$ iterations for synchronization, we need to consider how to schedule the overlapping to maximally reduce the time cost. Figure~\ref{fig:overlapping} (b) (c) shows that directly communicating all layers in the first local iteration may not be optimal. When the communication time is much larger than the BP time, a more load balancing overlaping strategy like shown in Figure~\ref{fig:overlapping} (c) is better. In the next section, we formally define the overall training time of PLSGD within one communication period as the optimization goal, and identify three properties that can be used to reduce the search space of the solution.

\subsection{Scheduling and Optimization}\label{sec:schedule}

With partial synchronization, the intuitive way to reduce the training time is minimizing the training time $T_{P-LSGD}$ by adjusting the partition of layer index sets $\Lcal_{1:H}$. According to the time cost of $T_{P-LSGD}$ (Eq.~\ref{eq:T-PartLSGD}), the time $t_{FP}$, $t_{BP}$ and $t_{COMM}$ are fixed, and different partitions of layer indexes in to $\Lcal_{1:H}$ will influence $ \sum_{h=1}^{H}\text{max}(t_{BP}^{\Lcal_{h:H}}-t_{BP}^{h_0}, t_{COMM}^{\Lcal_h})$. If we can adjust the $\Lcal_{1:H}$ well,  $ t_{COMM}^{\Lcal_h}$ might be fully overlapped. The scheduling problem is highly complex. To address this, we treat each layer partition as an interval, which allows us to approximate a solution. Note that the communication of the last layer cannot be overlapped, because the communication must wait for the backward. Then, we propose the following objective function $P_H^L$ as the goal of DreamDDP.
\begin{small}
    
\begin{equation}\label{eq:minPHL}
    \min_{\Lcal_{1:H}^{L:1}} P_H^L = \sum_{h=1}^{H} (t_{BP}^{\Lcal_{1:h-1}} + t_{BP}^{h_0} + \text{max}(t_{BP}^{\Lcal_{h:H}}-t_{BP}^{h_0}, t_{COMM}^{\Lcal_h})),
\end{equation}
\end{small}
which is a combinatorial optimization problem, and $\Lcal_{1:H}^{L:1}$ represents the possible partitions $\Lcal_{1:H}$ of layers $\{L,...,1\}$. The brute force solution of this problem requires time complexity of $O(C_{H+L}^L)=\frac{(H+L)!}{L!H!}$.


\textbf{Searching Process and Sub-structure Problem.} To efficiently solve this problem, we identify the optimal substructure of problem~\ref{eq:minPHL}.  Now, we solve this problem with an efficient searching algorithm with reduced complexity as $\Ocal(2^{\text{min}(L-H, H)})$. The search space is significantly reduced by the optimal substructure of this problem. The searching process is to iteratively decide whether to assign $l$-th layer to the $h$-th set $\Lcal_h$. From this perspective, given assigned layers $\{L,...,l+1 \}$ in disjoint sets $\{ \Lcal_1, ...,\Lcal_{h-1}, \Lcal_{h}\}$, we define the sub-problem as follows,
\begin{small}
\begin{equation}\label{eq:subminPHL}
\min_{\Lcal_{h:H}^{l:1}} S_{h}^l = \sum_{i=h}^{H}( t_{BP}^{\Lcal_{1:i-1}} + t_{BP}^{i_0} +\text{max}(t_{BP}^{\Lcal_{i:H}}-t_{BP}^{i_0}, t_{COMM}^{\Lcal_{i}})),
\end{equation}
\end{small}
in which the $i$ iterates from $h$ to $H$, $H$ and $L$ are fixed in $S_h^l$, optimizing on $\Lcal_{h:H}^l$ means to split layers $\{1,...,l\}$  to sets $\{\Lcal_h, \Lcal_{h+1},..., \Lcal_H\}$. Note that $l$ follows descending order as the backward process is from output to input, and $h$ follows the ascending order as the local training follows time order in a synchronization period ${r+1,...,t+H}$. When, $h=1$ and $l=L$ in $S_h^l$,  $S_1^L$ becomes the original problem~\ref{eq:minPHL}, i.e. $P_H^L =  S_{1}^L$. 
\begin{definition}\label{def:assign}
    (\textbf{Assignment $A^l$.}) We call assigning a layer $l$ into $\Lcal_h $ as replacing the set $\Lcal_h$ by $\Lcal_h \cup\{l\} $ in original sets $\Lcal_{1:H}$ and we have $A^l=h$.
\end{definition}



Optimizing $ P_H^L$ can be seen as an iterative searching process from optimizing $S_1^L$ to $S_H^1$. In each iteration of optimizing $S_h^l$,  we have $\{L,...,l+1 \}$ that have been assigned in disjoint sets $\{ \Lcal_1, ...,\Lcal_{h-1}, \Lcal_{h}\}$. Then, based on whether to assign the layer $l$ to set $\Lcal_h$, each $S_h^l$ has the substructure as:




\begin{equation}\label{eq:substructure}
S_{h}^l =
\begin{cases}
    \begin{aligned}
            S_{h}^{l-1}  & +\text{max}(t_{BP}^{\Lcal_{h:H}}-t_{BP}^{h_0}, t_{COMM}^{\Lcal_h\cup\{l\}}) \\
            &  - \text{max}(t_{BP}^{\Lcal_{h:H}}-t_{BP}^{h_0}, t_{COMM}^{\Lcal_h})
\end{aligned}, 
        & \text{if} \ A^l=h  \\
    \begin{aligned}
        S_{h+1}^l  & +   t_{BP}^{\Lcal_{1:h-1}} + t_{BP}^{h_0} 
\\ & + \text{max}(t_{BP}^{\Lcal_{h:H}}-t_{BP}^{h_0}, t_{COMM}^{\Lcal_h}), 
\end{aligned}
& \text{if} \ A^l \neq h
\end{cases},
\end{equation}

in which, for simplicity, we do not expand terms $t_{BP}^{\Lcal_{h:H}}-t_{BP}^{h_0}=\sum_{i=h}^H (\sum_{l\in \Lcal_i} t_{BP}^l)-t_{BP}^{h_0}$ and $ t_{COMM}^{\Lcal_h}=\sum_{l\in \Lcal_h} t_{COMM}^l$ that follows definition in. The first term means that we assign $l$ to $\Lcal_h$, thus the total time cost $S_{h}^{l-1}$ re-estimates the time cost of the set $\Lcal_h$ for the $h$-th training round. Note that the $t_{BP}^{\Lcal_{h:H}} = \sum_{i=h}^H (\sum_{l\in \Lcal_i} t_{BP}^l)$  actually is fixed when scheduling $S_h^l$ no matter whether $l\in \Lcal_h$, because all left layers $\{\Lcal_h,...,\Lcal_{H} \}$ will conduct BP operations at $h$-th iteration, which is fixed for $S_h^l$. We write the optimal solution of $\min_{\Lcal_{h:H}^{l:1}} S_{h}^l$ as $S_{h,\star}^l$.

The optimal solution of problem~\ref{eq:minPHL} $P_H^L$ and $S_1^L$ should minimize the communication time as much as possible, because the total BP time is fixed as $t_{BP}^{\Lcal}$ for each iteration $h$ from $1$ to $H$. The minimum of $P_H^L$ should contain the minimal value of communication. Now, we give following definition and properties which are used to analyze and reduce the search space.

\begin{definition}\label{def:CHA}
    (\textbf{Communication-Hide Assignment.}) Given assigned layers $\{L,...,l+1 \}$ in disjoint sets $\{ \Lcal_1, ...,\Lcal_{h}\}$, and the optimization goal $ S_{h}^{l}$, we call the assignment (Definition~\ref{def:assign})  $A^l=h$  is Communication-Hide (CH) if it satisfies  
\begin{equation}\label{eq:hided-assign}
    t_{BP}^{\Lcal_{h:H}}-t_{BP}^{h_0} \geq  t_{COMM}^{\Lcal_h\cup\{l\}}.
\end{equation}
Otherwise, we call  the assignment  $A^l=h$  is communication-overflowed (CO). 
\end{definition}



\begin{figure}[!ht]
\vspace{-0.0cm}
\centering
\setlength{\abovecaptionskip}{0.2cm}
\setlength{\belowcaptionskip}{0.2cm} 
    \subfigure[At iteration $r$, $ S_{h}^{l} \to  S_{h}^{l-1}$ ($A^l=h$).]
    {
    \includegraphics[width=0.98\linewidth]{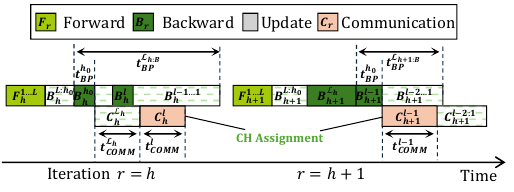}\label{fig:GreedyAssign-yes}
    }
    \subfigure[At iteration $r$, $ S_{h}^{l} \to  S_{h+1}^{l}$($A^l\neq h$).]
    {
    \includegraphics[width=0.98\linewidth]{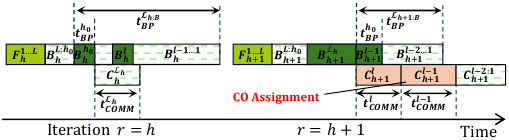}\label{fig:GreedyAssign-no}
    }
    \caption{Comparing whether assigning $A^l=h$ when $A^l_h$ is CH at iteration $r$. The time of $F_{h}^{1...L}$, $B_{h}^{L...l}$, $B_{h}^{l-1...1}$ are shrunk in Figure for clear illustration.}
\label{fig:GreedyAssign}
\vspace{-0.1cm}
\end{figure}






\begin{property}\label{prop:OptimalHiding}
(\textbf{Optimal Hiding.}) 
Considering the case that $A^l=h$ is CH for $S_h^l$, then $\text{max}(t_{BP}^{\Lcal_{h:H}}-t_{BP}^{h_0}, t_{COMM}^{\Lcal_h\cup\{l\}})= \text{max}(t_{BP}^{\Lcal_{h:H}}-t_{BP}^{h_0}, t_{COMM}^{\Lcal_h}) = t_{BP}^{\Lcal_{h:H}}-t_{BP}^{h_0}$, we have $S_{h,\star}^{l-1}\leq S_{h+1,\star}^{l} + t_{BP}^{\Lcal}$. The intuitive interpretation is shown in Fig.~\ref{fig:GreedyAssign}. If $l$-th communication is assigned at iteration $h$ and hidden by the computation, it is easier for unscheduled $H-h$ iterations $\{h+1,...,H\}$ to hide the communication of other $l-1$ layers $\{l-1,...,1\}$ than $l$ layers. Delaying the communication of $l$-th layer might lead to failed overlap of subsequent communications.

Thus, based on the optimal hiding (property~\ref{prop:OptimalHiding}), DreamDDP assigns $A^l=h$ if it is fully hidden by computation (line 20 in Algorithm~\ref{algo:DreamDDP}). This semi-greedy scheduling significantly reduces the search space. 
\end{property}




\begin{algorithm}[!t]
\caption{Assigning Communication of DreamDDP}\label{algo:DreamDDP}
\small
\textbf{Input: } Layer set $\Lcal$, synchronization frequency $H$, backward times $\{t_{BP}^l |l\in \Lcal \}$, communication times $\{t_{COMM}^l |l\in \Lcal \}$, initial assignment set $\Psi = \{\Lcal_h= \emptyset| h \in \{ 1,...,H\}\}$, solutions set $\Omega$. \\
\textbf{Output: } Final trained model $w_R$. 
\begin{algorithmic}[1]
    \State \Call{SolveSubProblem}{$\Psi$, $L$, $L$, $1$, $H$} \Comment{Scheduling}
    \State $\Psi^\star  = \argmin_{\Psi \in  \Omega} P_H^L$; \Comment{Find one optimal solution}
    \For{$\Lcal_h  \in \Psi^\star$} 
        \State $ l_h^{-} = \min_{l} t_{COMM}^{L...l} \ s.t. \ l \notin \Lcal_h    $ and Eq.~\ref{eq:morelayers}.  
        \State $ \Lcal_h = \Lcal_h \cup \{L,...,l \}   $ \Comment{Add more communication}
    \EndFor
\State Return $\Psi^\star$.
\State
\Procedure{SolveSubProblem}{$\Psi$, $l$, $L$, $h$, $H$}
    \If{$l == 0$}
       \State Add $\Psi$ in $ \Omega$. \Comment{Get one solution}
       \State Return
    \EndIf
    \If{$h==H$}
        \For{$i =l,...,1$}        
            \State Replace $\Lcal_h \in \Psi$ as $\Lcal_h=\Lcal_h \cup\{i\} $
            \State Add $\Psi$ in $ \Omega$. \Comment{Get one solution}
        \EndFor
        \State Return
    \EndIf

    \If{$\Lcal_h$ is $\emptyset$}  \Comment{At-least-one Assignment}
        \State Replace $\Lcal_h \in \Psi$ as $\Lcal_h=\Lcal_h \cup\{l\} $
        \State \Call{SolveSubProblem}{$\Psi$, $l-1$, $L$, $h$, $H$}
    \ElsIf{$  t_{BP}^{\Lcal_{h:H}}-t_{BP}^{h_0} \geq  t_{COMM}^{\Lcal_h\cup\{l\}}  $}   \Comment{Optimal Hiding}
        \State Replace $\Lcal_h \in \Psi$ as $\Lcal_h=\Lcal_h \cup\{l\} $
        \State \Call{SolveSubProblem}{$\Psi$, $l-1$, $L$, $h$, $H$}
    \ElsIf{$ t_{BP}^{\Lcal_{h:H}}-t_{BP}^{h_0} <  t_{COMM}^{\Lcal_h}$}    \Comment{Delayed COA}
        \State \Call{SolveSubProblem}{$\Psi$, $l$, $L$, $h+1$, $H$}
    \Else                   \Comment{ Search}
        \State $\Psi_{-} =$ DeepCopy($\Psi$)
        \State Replace $\Lcal_h \in \Psi_{-}$ as $\Lcal_h=\Lcal_h \cup\{l\} $ \Comment{Assign $A^l=h$}
        \State \Call{SolveSubProblem}{$\Psi_{+}$, $l-1$, $L$, $h$, $H$}   
        \State $\Psi_{+} =$ DeepCopy($\Psi$)
        \State \Call{SolveSubProblem}{$\Psi_{+}$, $l$, $L$, $h-1$, $H$}  
    \EndIf
\EndProcedure
\end{algorithmic}
\end{algorithm}

\begin{figure}[!ht]
\vspace{-0.0cm}
\centering
\setlength{\abovecaptionskip}{0.2cm}
\setlength{\belowcaptionskip}{0.2cm} 
    \subfigure[At iteration $r$, $ S_{h}^{l} \to  S_{h+1}^{l}$($A^l\neq h$).]
    {
    \includegraphics[width=0.98\linewidth]{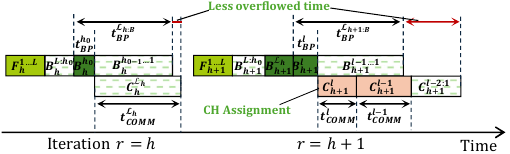}\label{fig:Overflow-CHA}
    }
    \subfigure[At iteration $r$, $ S_{h}^{l} \to  S_{h}^{l-1}$ ($A^l=h$).]
    {
    \includegraphics[width=0.98\linewidth]{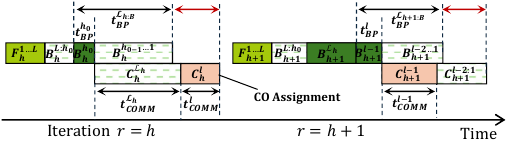}\label{fig:Overflow-COA}
    }
    \caption{Comparing whether assigning $A^l=h$ when the $t_{COMM}^l$ is completely not hided at iteration $r$.}
\label{fig:Overflow}
\vspace{-0.1cm}
\end{figure}

\begin{property}\label{prop:DelayedCO}
(\textbf{Delayed CO Assignment.}) Considering the case that communication of set $\Lcal_h$ cannot be hided by the BP computation, i.e. $t_{BP}^{\Lcal_{h:H}}-t_{BP}^{h_0} \leq t_{COMM}^{\Lcal_h} $. In this situation, the assignment $A^l=h$ is also CO, because $t_{BP}^{\Lcal_{h:H}}-t_{BP}^{h_0} \leq t_{COMM}^{\Lcal_h} \leq t_{COMM}^{\Lcal_h\cup\{ l\}} $ and Definition~\ref{def:CHA}. Now, $A^l=h$ will lead to extra communication time as $ t_{COMM}^{l} $, which is completely not hided and increases the total training time as shown in Fig.~\ref{fig:Overflow-CHA}. However, if we delay communicating layer $l$ at iteration $h+1$, the communication cost might be fully or partly hidden by the computation as shown in Fig.~\ref{fig:Overflow-COA}. 

Note that the delayed CO assignment (Property~\ref{prop:DelayedCO}) may not be optimal. However, this heuristic insight can help us significantly reduce the search space. For each $h$, if $t_{BP}^{\Lcal_{h:H}}-t_{BP}^{h_0} \leq t_{COMM}^{\Lcal_h}$, the $l$-th layer will be delayed for future assignment (Line 23 in Algorithm~\ref{algo:DreamDDP}).
\end{property}

\begin{property}\label{prop:AtLeastOne}
(\textbf{At-Least-One Assignment (ALO).}) If $\Lcal_h$ is $\emptyset$ when scheduling $l$-th layer at iteration $h$, assigning $A^l=h$ is optimal. The first case is that $t_{COMM}^l \leq t_{BP}^{\Lcal_{h:H}}-t_{BP}^{h_0}$, which satisfies property~\ref{prop:OptimalHiding}. In the other case $t_{COMM}^l > t_{BP}^{\Lcal_{h:H}}-t_{BP}^{h_0}$, assigning $A^l=h$ helps overlap $t_{COMM}^l$ with part of computation. Otherwise, delaying $A^l=h+1$ still causes the same overflowed communication time and narrowes the overlap opportunity of communicating subsequent layers.

In light of the ALO (property~\ref{prop:AtLeastOne}), when deciding $l$-th layer and $h$-th iteration, if $\Lcal_h$ is $\emptyset$, DreamDDP instantly assigns $A^l = h$ (Line 17 in Algorithm~\ref{algo:DreamDDP}).

In cases that all properties~\ref{prop:OptimalHiding}, ~\ref{prop:DelayedCO} and ~\ref{prop:AtLeastOne} cannot be utilized to reduce search space, the algorithm must conduct depth-first search (DFS) to collect the solutions 
\end{property}


\textbf{Complexity Analysis of Algorithm~\ref{algo:DreamDDP}.} We consider the worst case for the algorithm complexity, where the DFS search (Line 25 in Algorithm~\ref{algo:DreamDDP}) frequently happens during the scheduling. Each DFS search leads to two branches that need to be saved into the solution set. When $\Lcal_h$ is $\emptyset$, the $l$-th layer will be added into $\Lcal_h$ without searching, which does not generate more solutions. Thus, the size of $|\Omega|$ is less than $2^{L-H}$. Considering the branch $A^l=h$ (Line 28 in Algorithm~\ref{algo:DreamDDP}), the next layer $l-1$ must be delayed to assign to iteration $h+1$ because $ t_{BP}^{\Lcal_{h:H}}-t_{BP}^{h_0} <  t_{COMM}^{\Lcal_h\cup \{ l\}}$. Thus, the $|\Omega|$ is less than $2^H$. In summary, the size of the solution set $\Omega$ is upper bounded as $\Ocal(2^{\text{min}(L-H, H)})$.



\subsection{Filling the Bubble Time}\label{sec:bubble}

As illustated in Section~\ref{sec:partialLSGD} and Theorem~\ref{theo:converge-bound}, the model divergence $\Gamma_r = \frac{1}{K} \sum_{k=1}^K ||\wbar_{r} - w_{r}^{k} ||^2$ is a main influence factor of the convergence rate. Higher communication frequency (lower $H$) leads to less model divergence $\Gamma_r$ during training, thus accelerating convergence~\cite{kairouz2021advances,stich2018localsgd}. Orthogonal to increasing synchronization frequency $H$, which means to synchronize the whole model more frequently, another way to reduce the model divergence is to assign different layers with different communication frequencies. 

After scheduling the overlapping communication, we find that there may exist some idle communication time of the late layers during the late local training stages (larger local $r$ of in one synchronization period) as shown in Figure~\ref{fig:addbubble}. This implies that it is possible to synchronize these layers more frequently to reduce the model divergence, while not increasing the communication time.

\begin{figure}[!ht]
\vspace{-0.0cm}
\centering
\setlength{\abovecaptionskip}{0.2cm}
\setlength{\belowcaptionskip}{0.2cm} 
\includegraphics[width=0.95\linewidth]{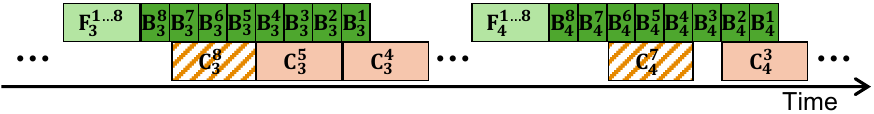}
\caption{The idle communication time of during the late local training stages that can be used to synchronize the late layers more frequently.}
\label{fig:addbubble}
\end{figure}

It is known that the early layers of deep learning models converge faster than the late layers~\cite{chen2023which,SVCCA} including the transformer~\cite{PipeTransformer}. This is because the early layers should be stablized during the late training stage, thus the late layers can be trained based on more stable intermediate features~\cite{chen2023which,SVCCA}. Fortunately, the idle communication time in the PLSGD aligns with the late layers as shown in Figure~\ref{fig:addbubble}. Thus, using the idle communication time to synchronize the late layers more frequently is exactly beneficial to reduce the model divergence than synchronizing more the early layers.


Formally, DreamDDP considers communicating more layers $\Lcal_h^{-} = \{L,...,l \}$ for communication iteration $h$ (Line 5 in Algorithm~\ref{algo:DreamDDP}) to accelerate convergence, while ensuring the communication time is not increased as following equation.

\begin{small}
\begin{equation}\label{eq:morelayers}
\text{max}(t_{BP}^{\Lcal_{h:H}}-t_{BP}^{h_0}, t_{COMM}^{\Lcal_h\cup\{L,...,l \}}) \leq \text{max}(t_{BP}^{\Lcal_{h:H}}-t_{BP}^{h_0}, t_{COMM}^{\Lcal_h}),
\end{equation}
\end{small}
in which $\{L,...,l\}$ disjoints with $\Lcal_h$.

\vspace{-0.0cm}
\section{Experimental Studies}\label{sec:exp}
\vspace{-0.1cm}
\subsection{Experimental Settings}\label{sec:exp-setting}
\textbf{Models and datasets.} We conduct distributed training on deep models including ResNet-18, ResNet-50, GPT-2 small~\cite{gpt2} and a small Llama-2~\cite{touvron2023llama} of 175M parameters. ResNet-18 is trained on CIFAR-10 and ResNet-50 is trained on CIFAR-100 with SGD momentum. Both GPT-2 and Llama-2 are trained on WikiText-2~\cite{wikitext} with Adam~\cite{KingBa15}. 

\textbf{Simulating Geo-distributed Training Clusters.} To simulate geo-distributed training, we vary the inter-machine bandwidths from 10MB/s to 1000GB/s to represent diverse network conditions across different geographic locations. ResNet-18 and ResNet-50 are trained on one cluster with 8 machines, each of which has 4 NVIDIA 2080Ti (totally 32 GPUs) and connected with 1GB/s Ethernet. LLMs are trained on the second cluster with 4 machines, each of which has 8 NVIDIA A6000 (totally 32 GPUs) and connected with 20GB/s Ethernet.

\textbf{Baselines.} We compare our algorithms with the naive S-SGD, ASC-WFBP~\cite{9488803} which is the state-of-the-art SGD overlapping algorithm as an improvement of WFBP based S-SGD~\cite{203269}, and LSGD with full synchronization (FLSGD)~\cite{coltLocalSGD,AperiodcLocalSGD}. As the ablation study, we also conduct experiments of the basic versions of DreamDDP, including LSGD with partial synchronization (PLSGD-ENP) of Equal-Number Partition (Example~\ref{ex:ENP}). Note that the optimizer for LLMs is Adam, which is generally used in the LLM training~\cite{gpt2,touvron2023llama}. The computation and communication dependency of stochastic Adam and SGD are same, as well as the local Adam and LSGD. Thus, we also call the stochastic Adam and local Adam as S-SGD and LSGD for simplicity. Here the S-SGD, LSGD and our algorithm mainly highlight the different modes of parallelization instead of the specific optimizers.



\subsection{Wall-clock Iteration Time}\label{sec:exp-wallclock}
Table~\ref{tab:wallclocktime} presents the wall-clock iteration time for each algorithm across different models and worker counts. 
DreamDDP consistently achieves the lowest iteration times across all models and worker configurations, demonstrating its superior efficiency in eliminating additional communication cost. Specifically, compared to the advanced DDP training method ASC-WFBP, DreamDDP achieves a speedup of $1.73 \sim 5.22\times$. Similarly, compared to FLSGD, DreamDDP achieves a speedup of $1.16 \sim 1.5\times$. This performance improvement can be even more pronounced in low-bandwidth environments, which are commonly encountered in geo-distributed scenarios.

\begin{table}[htb!]
\centering
\caption{Average iteration wall-clock time (in seconds) of 1000 running iterations. $S_1$ and $S_2$ represent the speedup of DreamDDP over ASC-WFBP and FLSGD.}
\vspace{-0.0cm}
\begin{adjustbox}{max width=\linewidth}
\begin{tabular}{c|cc|cc|cc|cc}
\hline
Model & \multicolumn{2}{c|}{ResNet-18} & \multicolumn{2}{c|}{ResNet-50} & \multicolumn{2}{c|}{GPT-2} & \multicolumn{2}{c}{Llama-2}\\
\# of workers & 8  & 32 & 8 & 32 & 8 & 32  & 8 & 32  \\
\hline
\hline
S-SGD/Adam & 1.31  & 2.40 & 3.08 & 4.52 & 2.63 & 8.67 & 2.80 &  8.75 \\
ASC-WFBP & 0.99 & 1.75 & 2.33 & 3.39 & 1.72 & 5.38& 1.83 & 5.36 \\
FLSGD/Adam & 0.66 & 0.57 & 1.59 & 0.94 & 0.80 & 2.08& 0.86 & 1.96   \\
PLSGD/Adam-ENP & 0.58 & 0.46 & 1.37 & 0.69 & 0.70 & 1.99& 0.77 & 1.89  \\
DreamDDP & \textbf{0.57} & \textbf{0.38} & \textbf{1.33} & \textbf{0.65} & \textbf{0.66} & \textbf{1.60} & \textbf{0.72} & \textbf{1.55}  \\
$S_1$ & $1.73\times$ & $4.61\times$ &$1.75\times$ & $5.22\times$ & $2.61\times$ & $3.36\times$ & $2.54\times$ & $3.46\times$  \\
$S_2$ & $1.16\times$ & $1.50\times$ &$1.20\times$ & $1.45\times$ & $1.21\times$ & $1.23\times$ & $1.20\times$ & $1.26\times$  \\
\hline
\end{tabular}
\end{adjustbox}
\label{tab:wallclocktime}
\vspace{-0.5cm}
\end{table}
\vspace{-0.0cm}

\begin{figure}[!ht]
\vspace{-0.0cm}
\centering
\setlength{\abovecaptionskip}{0.2cm}
\setlength{\belowcaptionskip}{0.2cm} 
    \subfigure[ResNet-18.]
    {
    \includegraphics[width=0.46\linewidth]{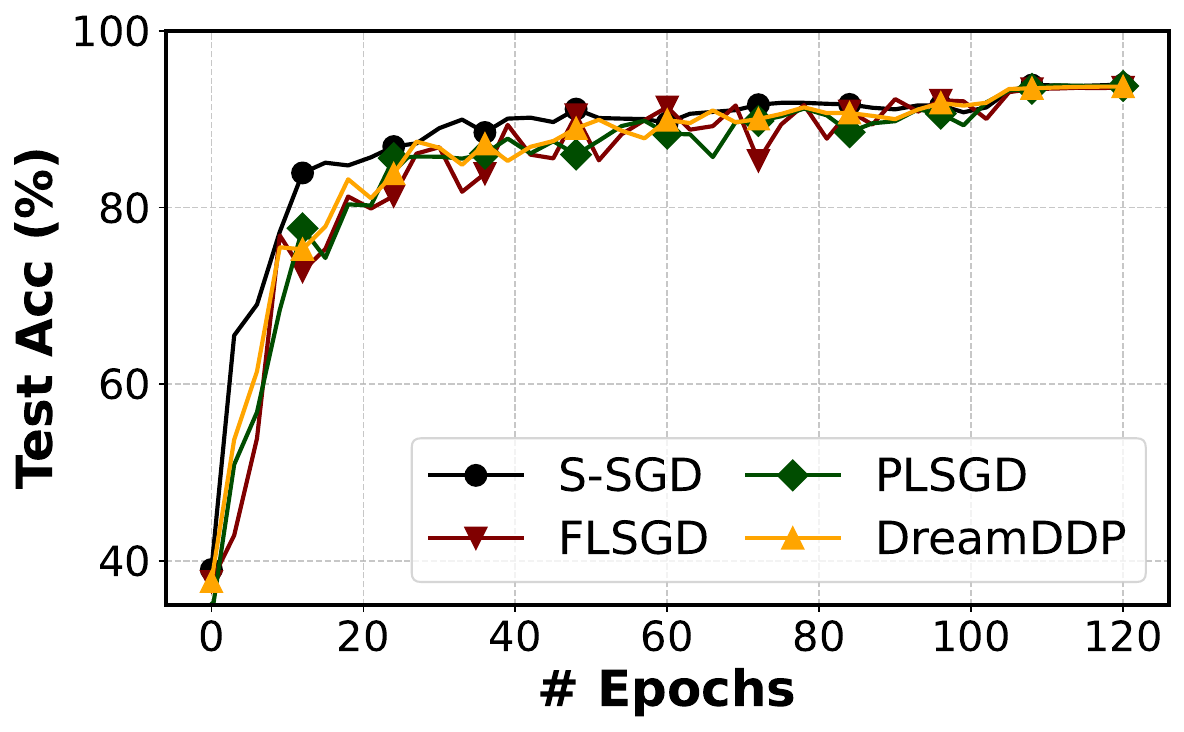}\label{fig:DifH-Res18-convergence-8w}
    }
    \subfigure[ResNet-50.]
    {
    \includegraphics[width=0.46\linewidth]{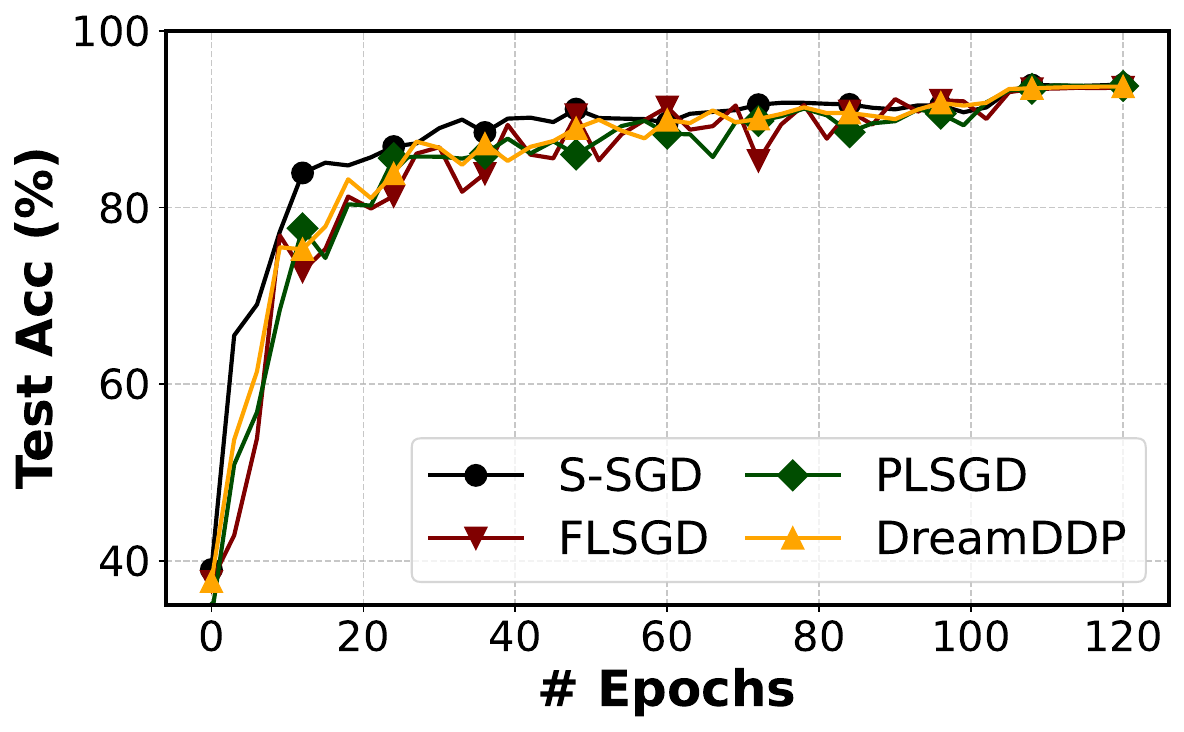}\label{fig:DifH-Res50-convergence-8w}
    }
    \subfigure[GPT-2.]
    {
    \includegraphics[width=0.46\linewidth]{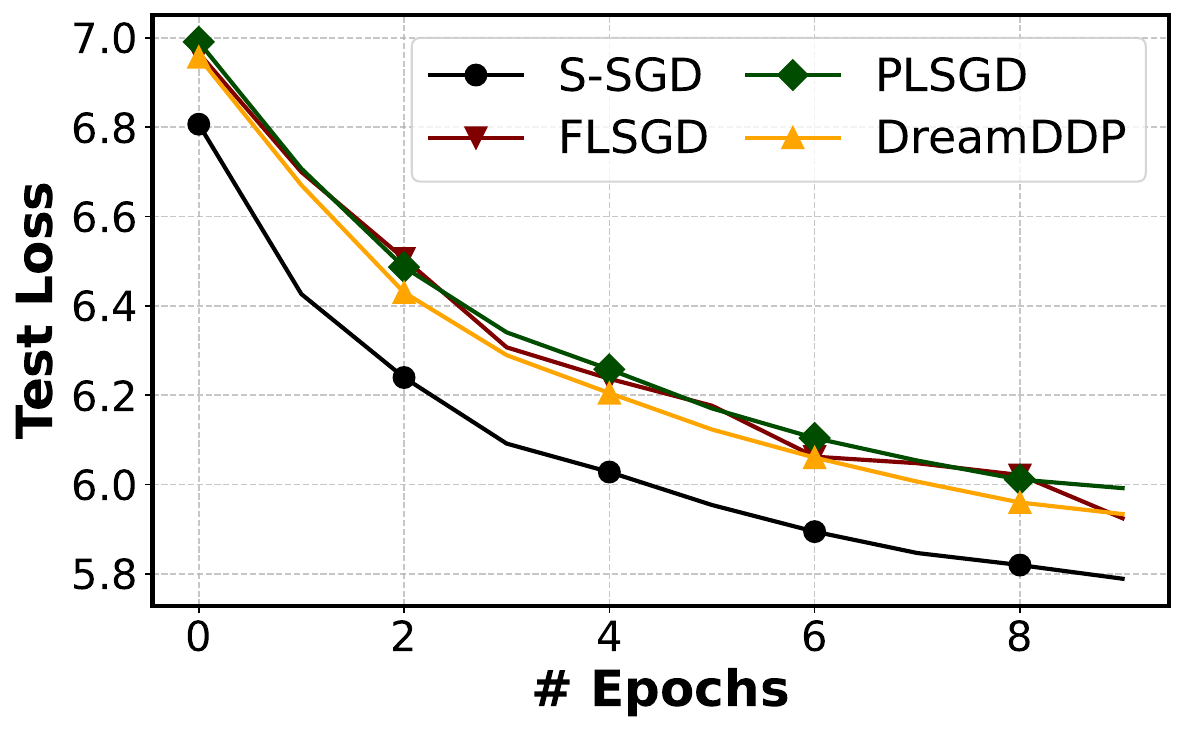}\label{fig:DifH-gpt2-convergence-8w}
    }
    \subfigure[Llama-2.]
    {
    \includegraphics[width=0.46\linewidth]{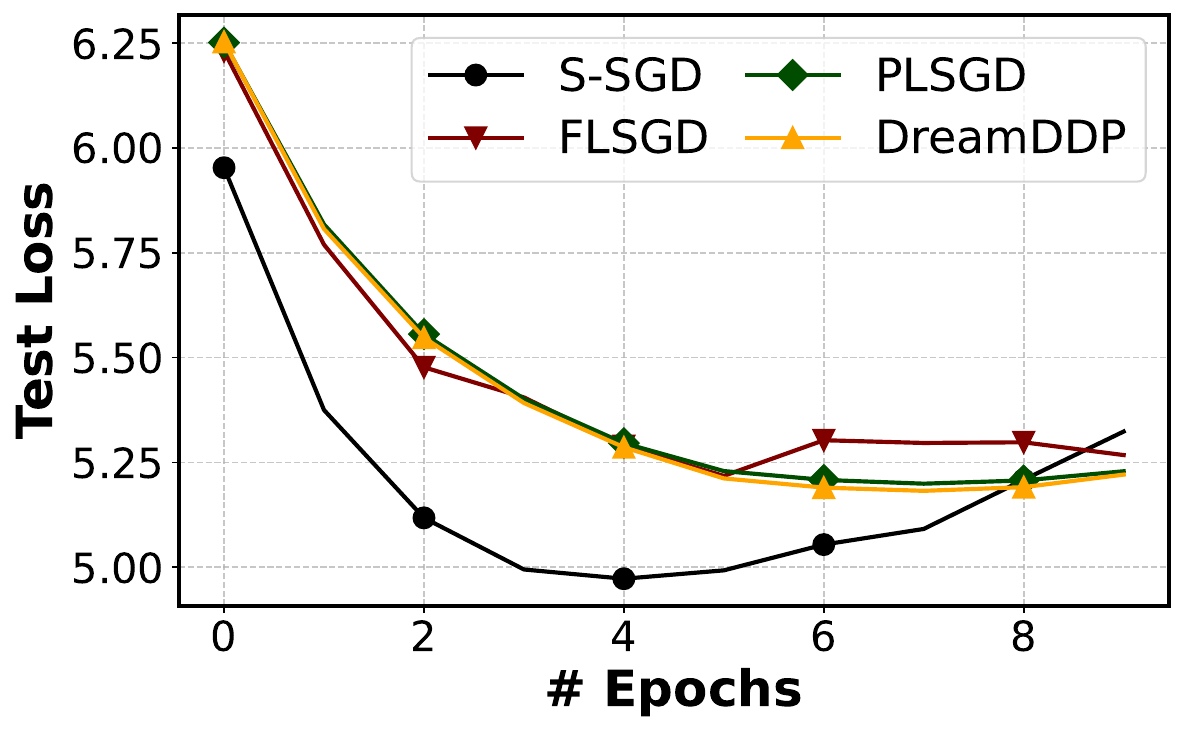}\label{fig:DifH-llama2-convergence-8w}
    }
    \caption{Convergence w.r.t. epochs on 8 workers.}
\label{fig:DifH-convergence-epochs-8w}
\vspace{-0.0cm}
\end{figure}

\begin{figure}[!ht]
\vspace{-0.0cm}
\centering
\setlength{\abovecaptionskip}{0.2cm}
\setlength{\belowcaptionskip}{0.2cm} 
    \subfigure[ResNet-18.]
    {
    \includegraphics[width=0.46\linewidth]{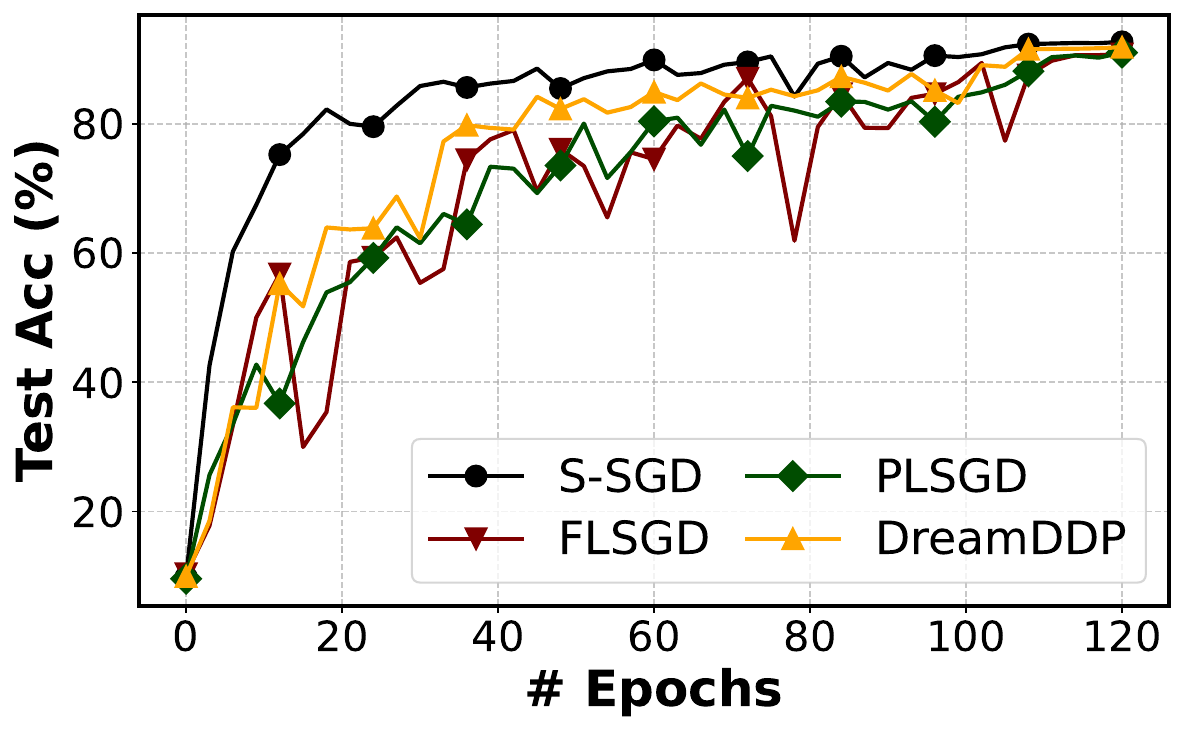}\label{fig:DifH-Res18-convergence}
    }
    \subfigure[ResNet-50.]
    {
    \includegraphics[width=0.46\linewidth]{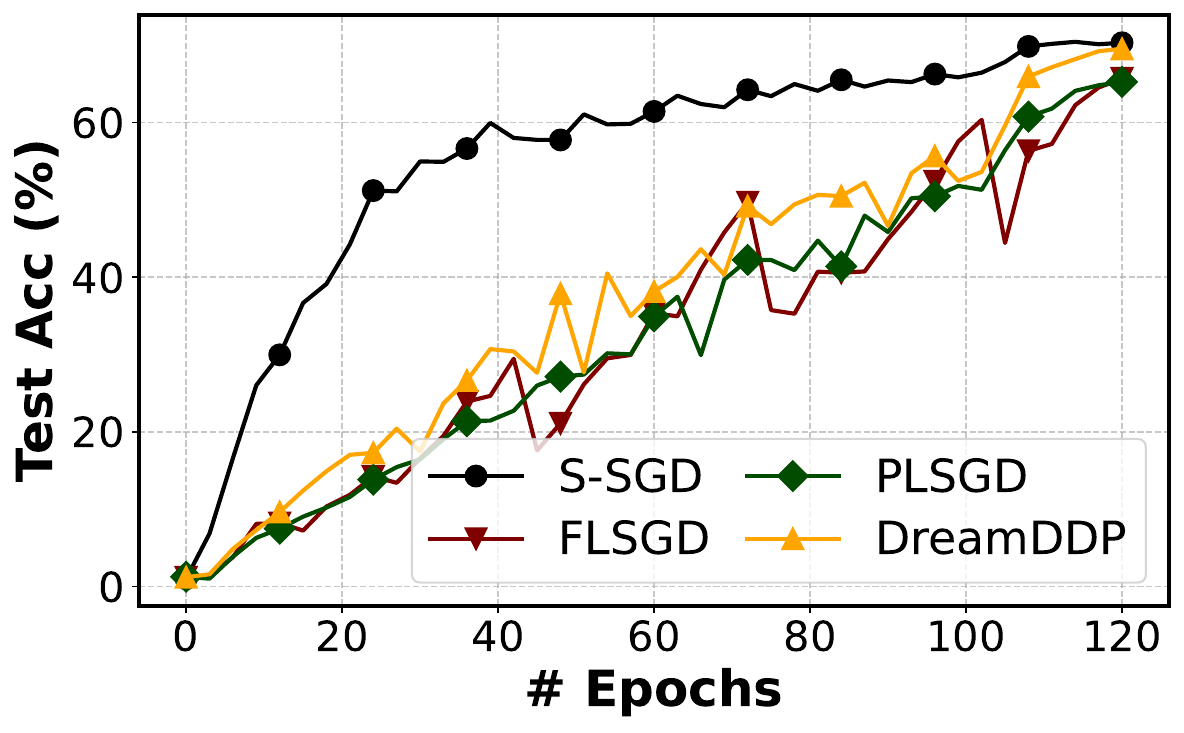}\label{fig:DifH-Res50-convergence}
    }
    \subfigure[GPT-2.]
    {
    \includegraphics[width=0.46\linewidth]{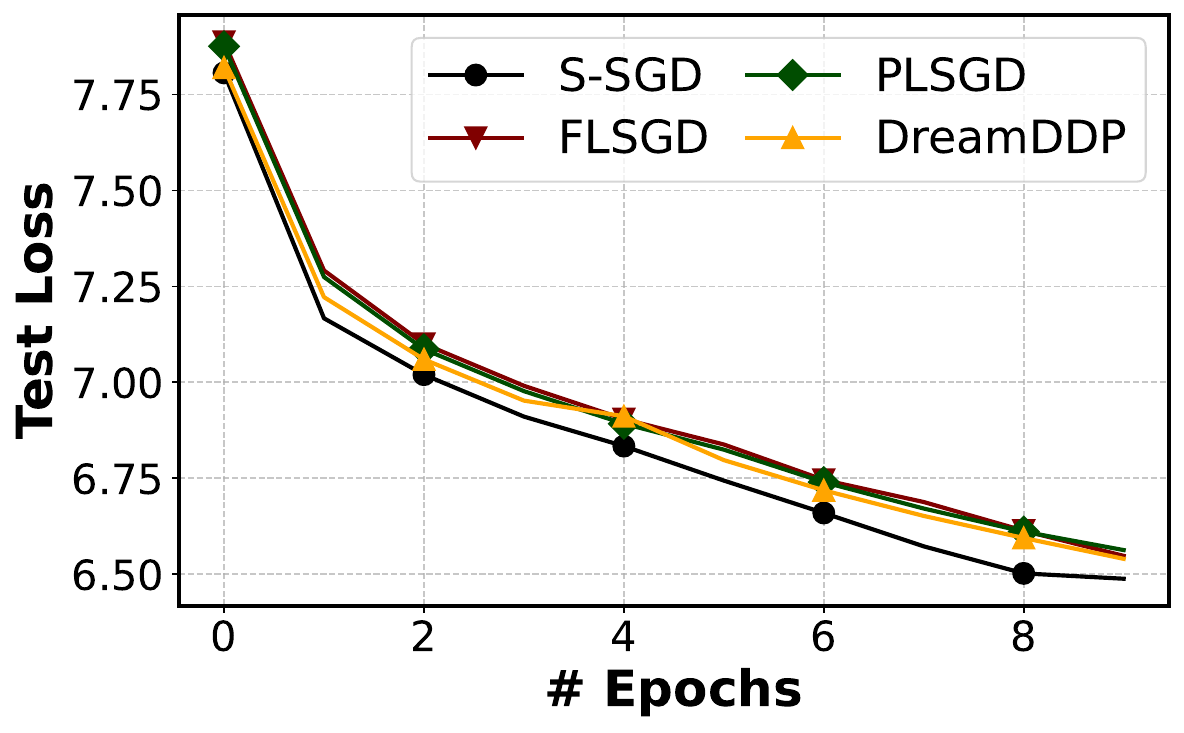}\label{fig:DifH-gpt2-convergence}
    }
    \subfigure[Llama-2.]
    {
    \includegraphics[width=0.46\linewidth]{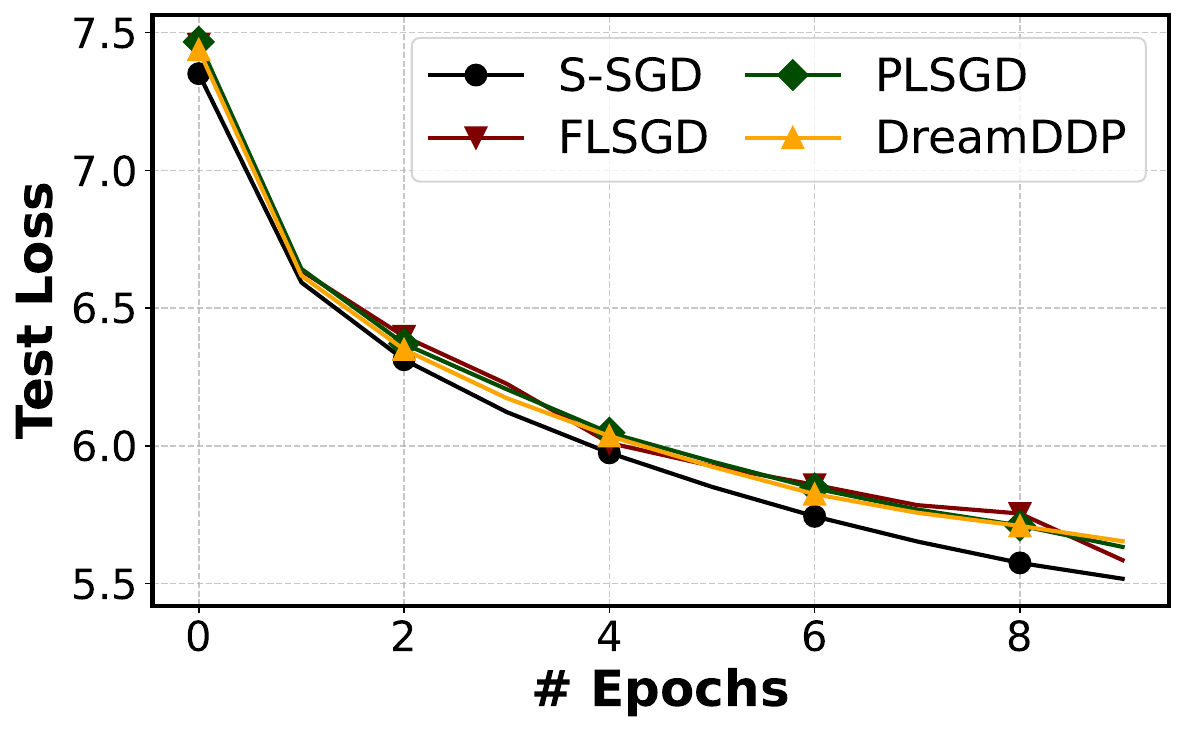}\label{fig:DifH-llama2-convergence}
    }
    \caption{Convergence w.r.t. epochs on 32 workers.}
\label{fig:DifH-convergence-epochs}
\vspace{-0.0cm}
\end{figure}

\subsection{Convergence w.r.t. Epoch}\label{sec:exp-convergence-epoch}
With respect to the iteration time, the FLSGD is also faster than ASC-WFBP. However, such acceleration introduces worse convergence due to its less $H\times$ communication size. We compare different synchronization frequency $H$ in ig.~\ref{fig:DifH-convergence-epochs-8w} and Fig.~\ref{fig:DifH-convergence-epochs}. Results show that the larger $H$ makes convergence w.r.t. epochs slower for both FLSGD and DreamDDP. Nevertheless, the supplement of the communication in DreamDDP helps accelerate the convergence. And the Fig.~\ref{fig:DifH-convergence-epochs} shows that the increased number of workers does not influence the performance improvement of DreamDDP, demonstrating the good scalability of it.



\begin{figure}[!ht]
\vspace{-0.0cm}
\centering
\setlength{\abovecaptionskip}{0.2cm}
\setlength{\belowcaptionskip}{0.2cm} 
    \subfigure[ResNet-18.]
    {
    \includegraphics[width=0.46\linewidth]{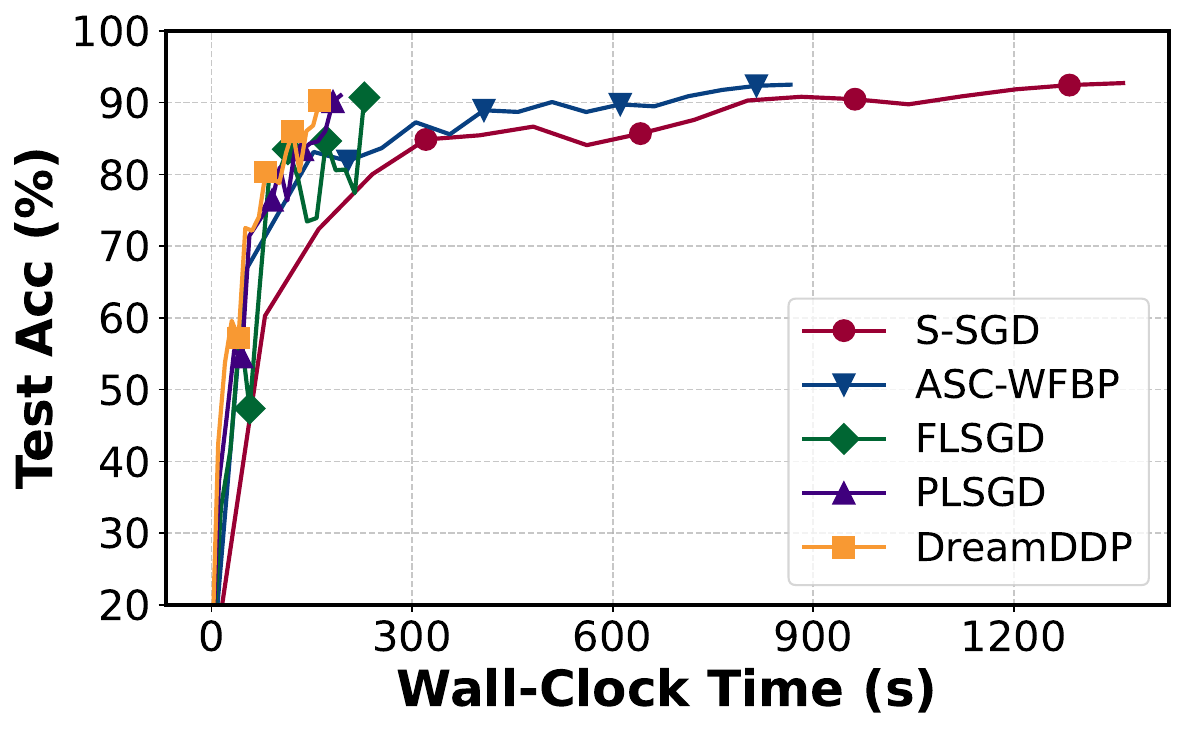}
    }
    \subfigure[ResNet-50.]
    {
    \includegraphics[width=0.46\linewidth]{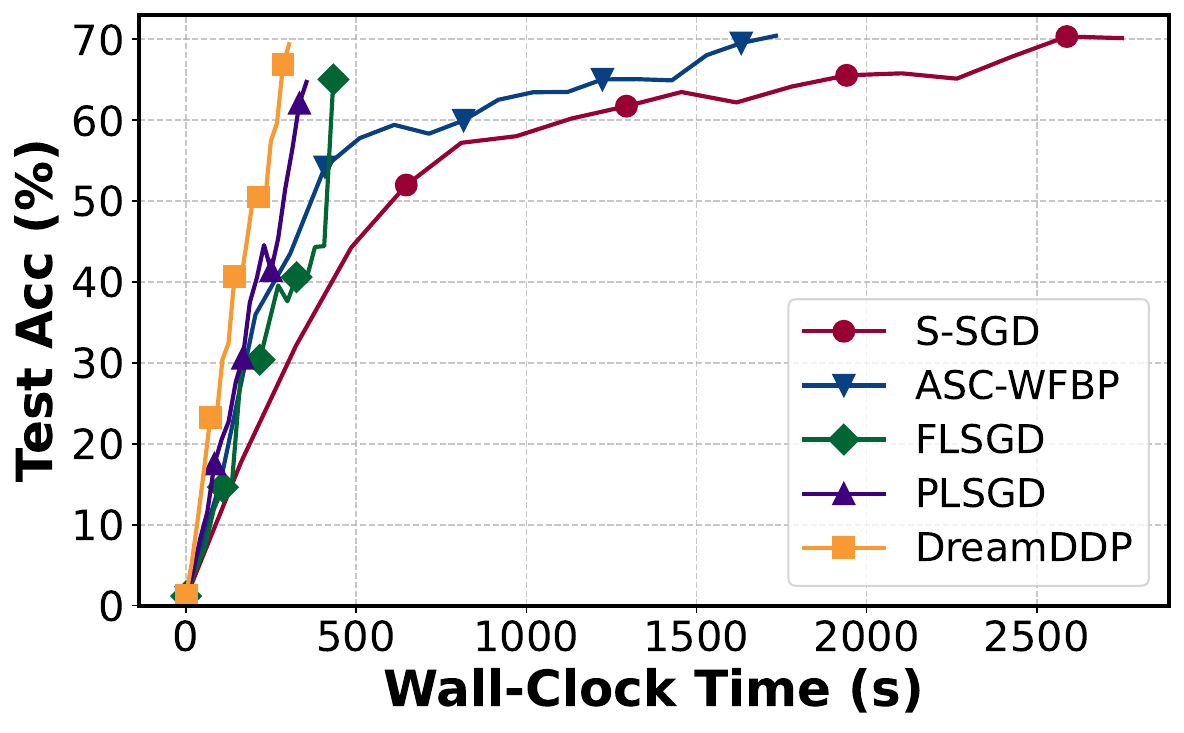}
    }
    \subfigure[GPT-2.]
    {
    \includegraphics[width=0.46\linewidth]{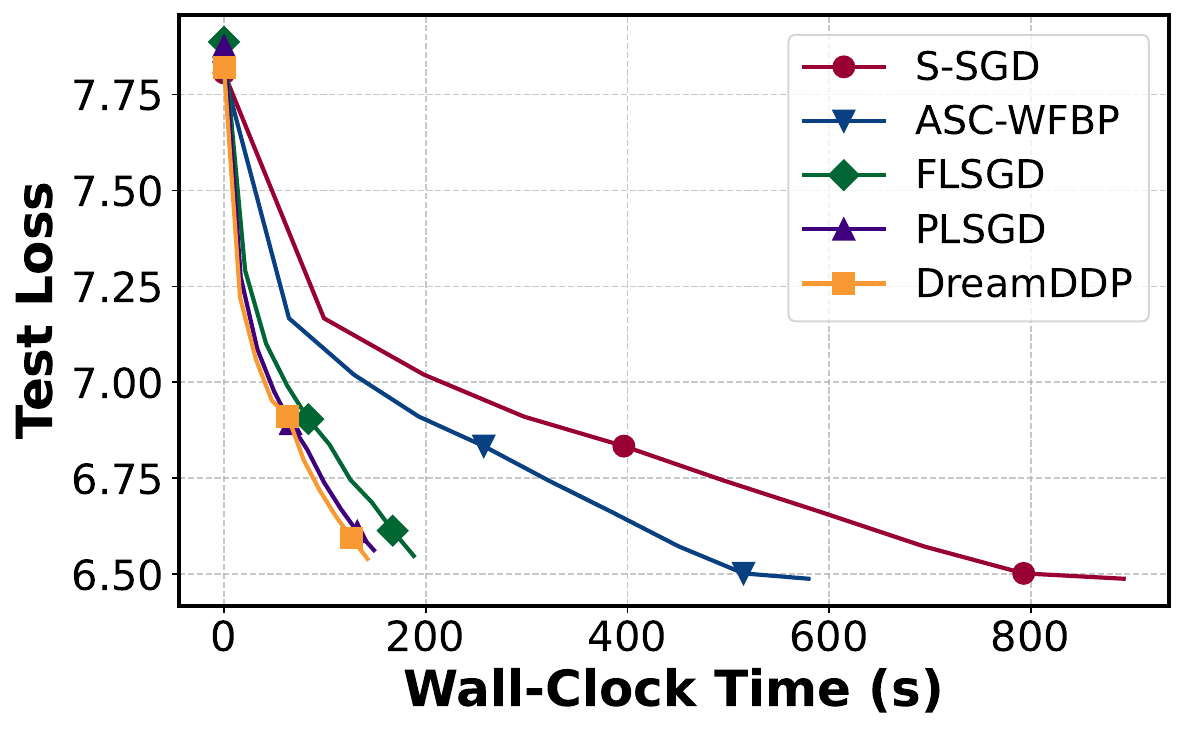}
    }
    \subfigure[Llama-2.]
    {
    \includegraphics[width=0.46\linewidth]{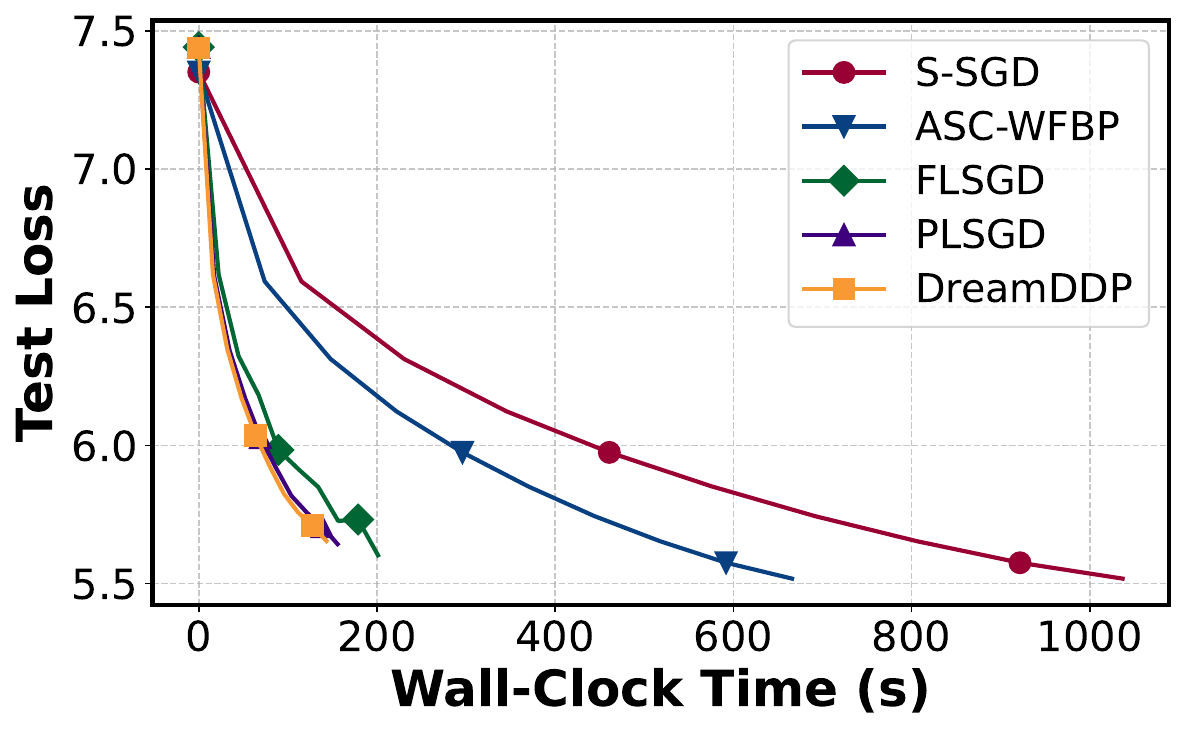}
    }
    \caption{Convergence w.r.t. wall-clock time of training with different algorithms on 32 workers.}
\label{fig:convergence-wallclock}
\vspace{-0.0cm}
\end{figure}

\subsection{Wall-clock Convergence}\label{sec:exp-convergence-time}
Fig.~\ref{fig:convergence-wallclock} illustrates the convergence of different algorithms w.r.t. wall-clock time and Table~\ref{tab:target-acc-wallclock} presents wall-clock time to reach target performance. Both visualizations show that the wall-clock convergence of all LSGD methods is significantly faster than that of SGD. The periodic model synchronization scheme effectively reduces the high communication cost associated with the frequent synchronization in S-SGD and ASC-WFBP. Specifically, DreamDDP achieves a maximum speedup of $3.91\times$ over ASC-WFBP.

Additionally, DreamDDP achieves a maximum speedup of $1.56\times$ over FLSGD, due to the advanced scheduling algorithm that eliminates the extra communication overhead. Furthermore, when compared to PLSGD with a similar communication-computation overlap strategy, DreamDDP demonstrates improved convergence speed, as displayed in Fig.~\ref{fig:convergence-wallclock} and Table~\ref{tab:target-acc-wallclock}. This improvement shows the effectiveness of DreamDDP's supplementary communication mechanism that maximally exploits overlapping opportunities without incurring extra training time.

\textbf{Different Synchronization Frequency.} We compare different DreamDDP synchronization frequencies $H$ with FLSGD with $H=10$ and S-SGD in Fig.~\ref{fig:convergence-epoches-8} and Fig.~\ref{fig:convergence-epoches-32}. The results on both 8 and 32 workers show that the larger $H$ makes convergence w.r.t. epochs slower for DreamDDP, which aligns with the intuition. Nevertheless, the supplement of the communication in DreamDDP helps accelerate the convergence, achieving comparable convergence speed with FLSGD with higher frequency. And the Fig.~\ref{fig:convergence-epoches-32} shows that the increased number of workers does not influence the performance improvement of DreamDDP, demonstrating the good scalability of it.

\begin{table}[htb!]
\centering
\caption{Wall-clock time to target model performance of different training methods. Different training methods require different wall-clock time to achieve the target performance.  For DL models, test accuracy is used. While for LLMs, training loss is used. $S_1$ and $S_2$ represent the speedup of DreamDDP over ASC-WFBP and FLSGD. We use the \textbf{bold font} to indicate the best results.}
\vspace{-0.0cm}
\begin{adjustbox}{max width=\linewidth}
\begin{tabular}{c|cc|cc|cc|cc}
\hline
Model & \multicolumn{2}{c|}{ResNet-18} & \multicolumn{2}{c|}{ResNet-50} & \multicolumn{2}{c|}{GPT-2} & \multicolumn{2}{c}{Llama-2}\\
\# of workers & 8  & 32 & 8 & 32 & 8 & 32  & 8 & 32  \\
\hline
\hline
\rowcolor{greyL} S-SGD/Adam & 1863.78  & 1282.08 & 6353.16  & 3546.24 & 394.15 & 167.29 & 342.57 &  247.55 \\
\rowcolor{greyL} ASC-WFBP & 1608.00 & 1420.26 & 5384.58 & 3192.24 & 361.19 & 148.32& 319.68 & 156.57   \\
\hline
FLSGD/Adam & 687.96 & 651.06 & 2141.64 & 1351.98 & 294.42 &114.38 & 213.15 &  127.10 \\
PLSGD/Adam-ENP & 603.36 & 653.04 & 1900.14 & 1261.08 & 270.68 &94.093 & 176.32 & 114.32 \\
DreamDDP & \textbf{573.10} & \textbf{515.58}& \textbf{1375.68} & \textbf{1067.76}  & \textbf{247.97} & \textbf{88.63} & \textbf{166.28} &  \textbf{105.07}\\
$S_1$ &  $2.80\times$ & $2.75\times$ & $3.91\times$ & $2.99\times$ &$1.46\times$ & $1.67\times$ & $1.92\times$ & $1.49\times$  \\
$S_2$ &$1.20\times$ &$1.26\times$ & $1.56\times$ & $1.27\times$ & $1.19\times$ & $1.29\times$ & $1.28\times$  &  $1.21\times$  \\
\hline
\end{tabular}
\end{adjustbox}
\label{tab:target-acc-wallclock}
\vspace{-0.5cm}
\end{table}

\begin{figure}[!ht]
\vspace{-0.0cm}
\centering
\setlength{\abovecaptionskip}{0.2cm}
\setlength{\belowcaptionskip}{0.2cm} 
    \subfigure[ResNet-18.]
    {
    \includegraphics[width=0.46\linewidth]{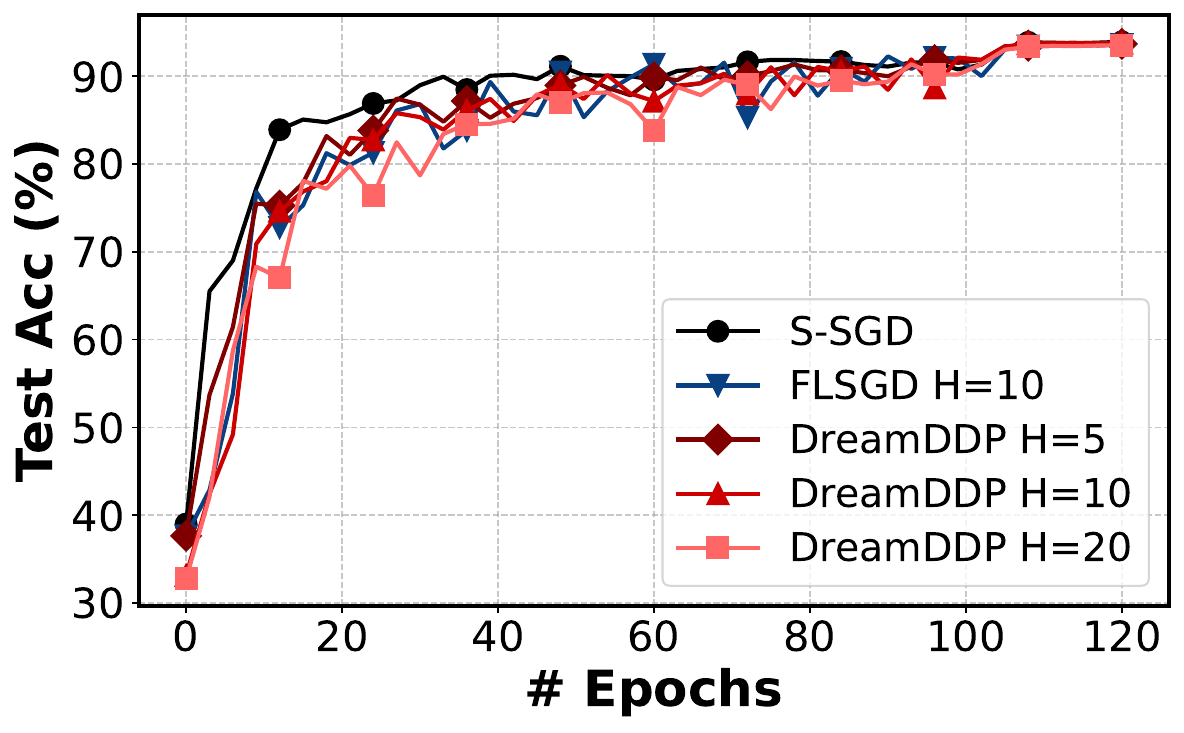}
    }
    \subfigure[ResNet-50.]
    {
    \includegraphics[width=0.46\linewidth]{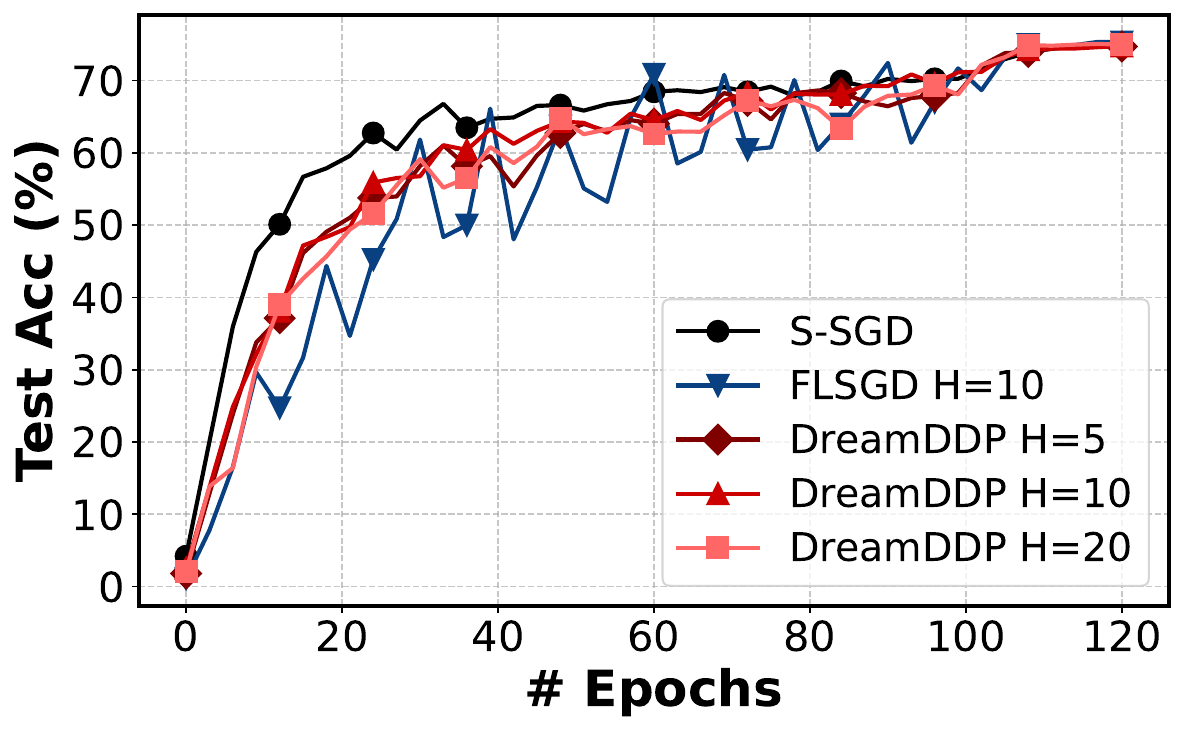}
    }
    \subfigure[GPT-2.]
    {
    \includegraphics[width=0.46\linewidth]{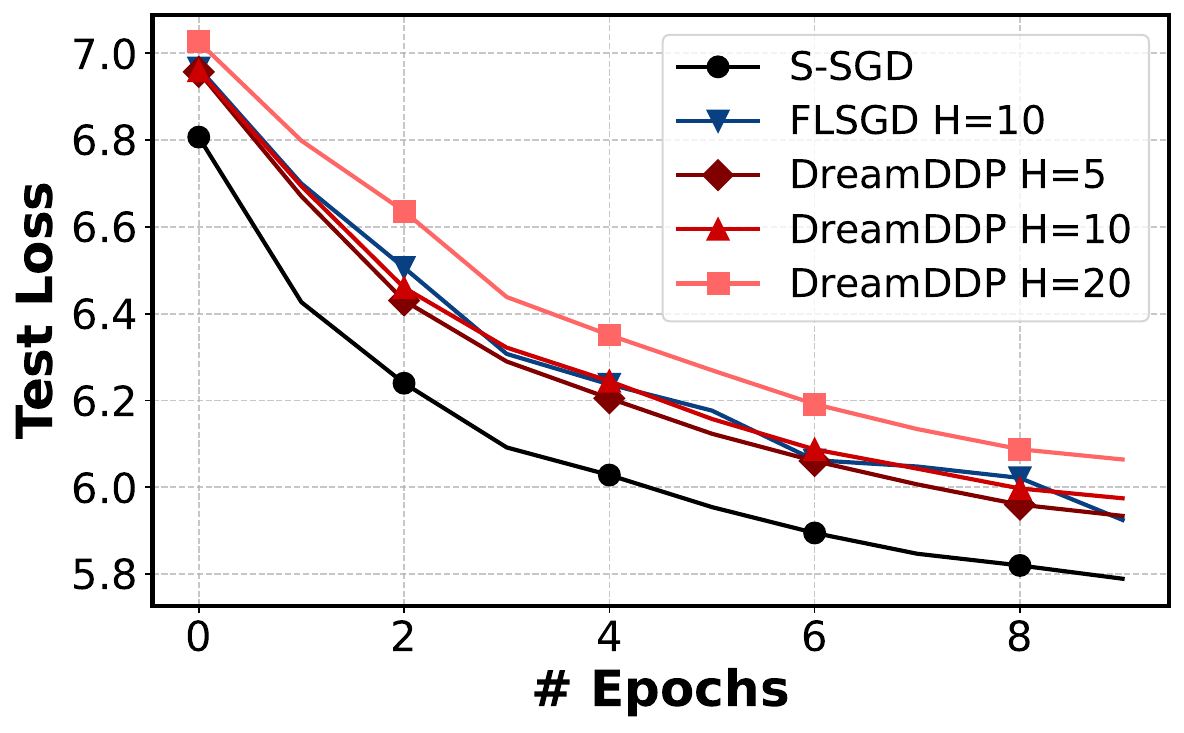}
    }
    \subfigure[Llama-2.]
    {
    \includegraphics[width=0.46\linewidth]{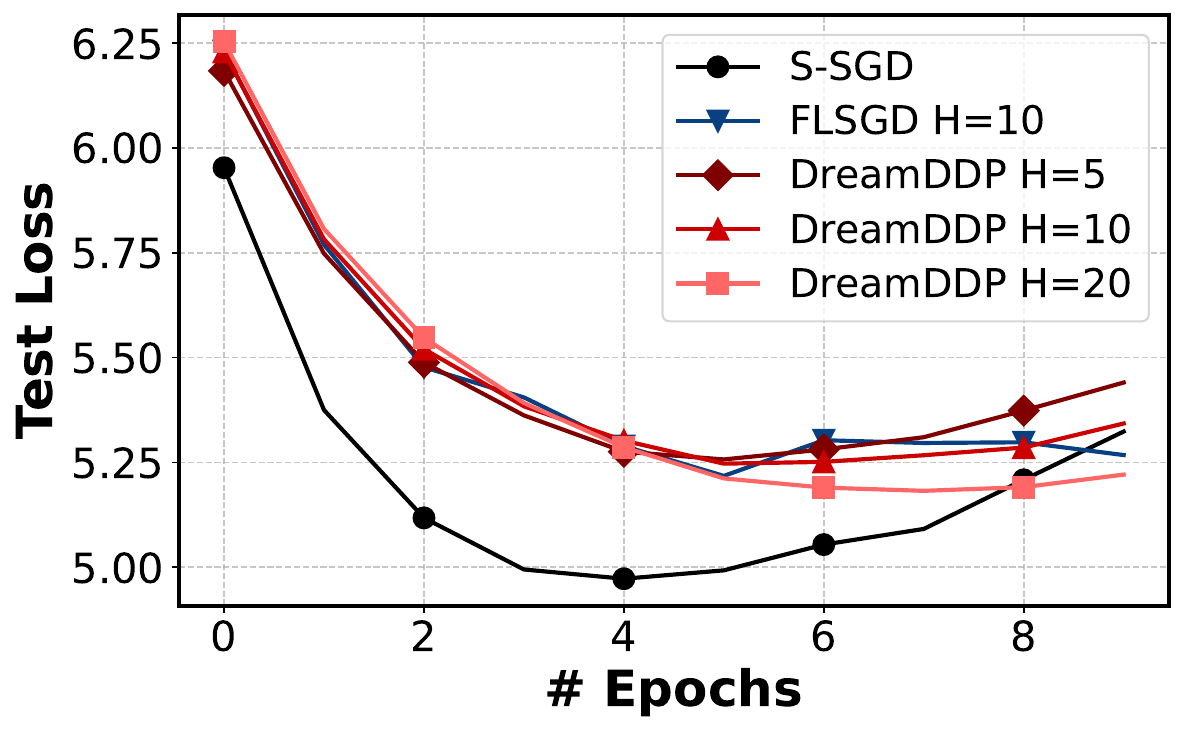}
    }
    \caption{Scheduling performance with different frequency $H$ on 8 workers.}
\label{fig:convergence-epoches-8}
\vspace{-0.0cm}
\end{figure}

\begin{figure}[!ht]
\vspace{-0.0cm}
\centering
\setlength{\abovecaptionskip}{0.2cm}
\setlength{\belowcaptionskip}{0.2cm} 
    \subfigure[ResNet-18.]
    {
    \includegraphics[width=0.46\linewidth]{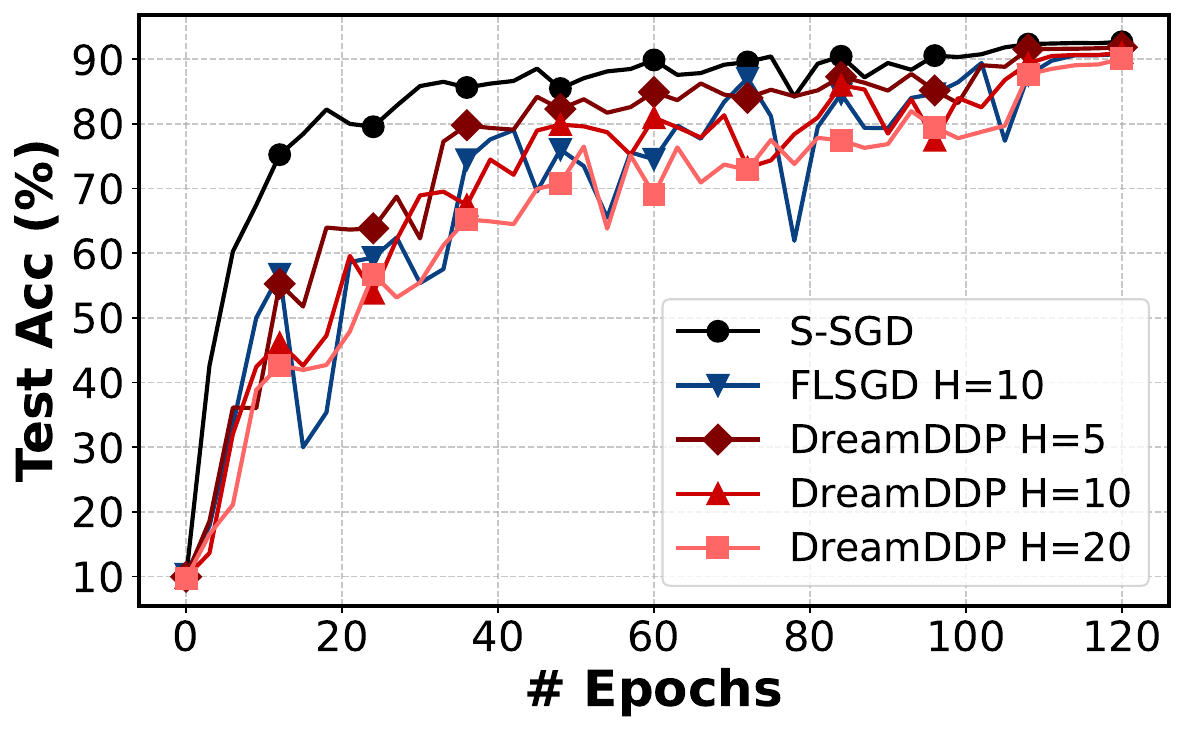}
    }
    \subfigure[ResNet-50.]
    {
    \includegraphics[width=0.46\linewidth]{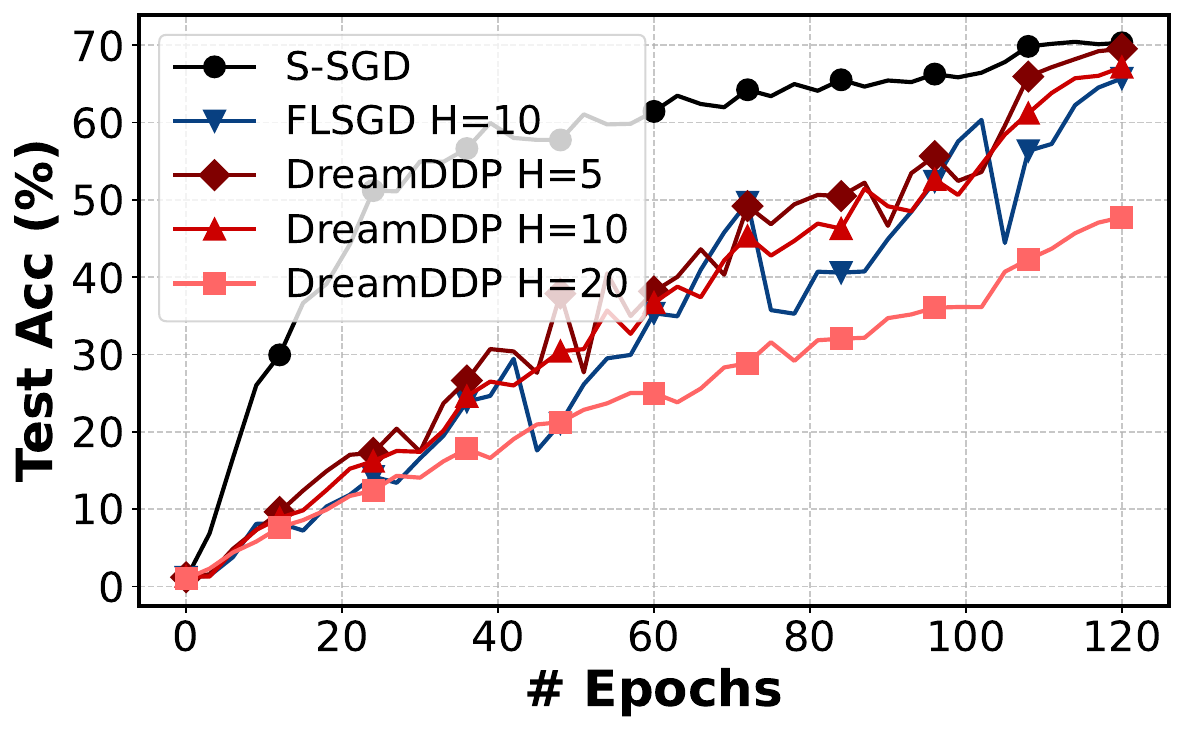}
    }
    \subfigure[GPT-2.]
    {
    \includegraphics[width=0.46\linewidth]{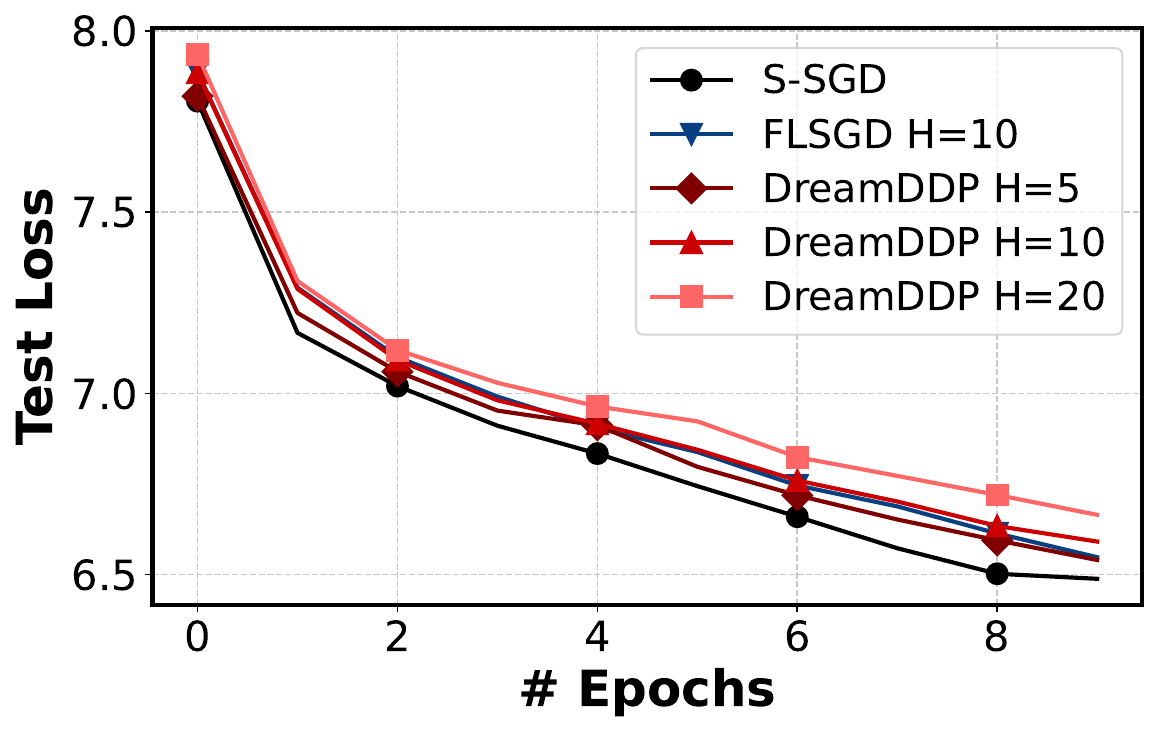}
    }
    \subfigure[Llama-2.]
    {
    \includegraphics[width=0.46\linewidth]{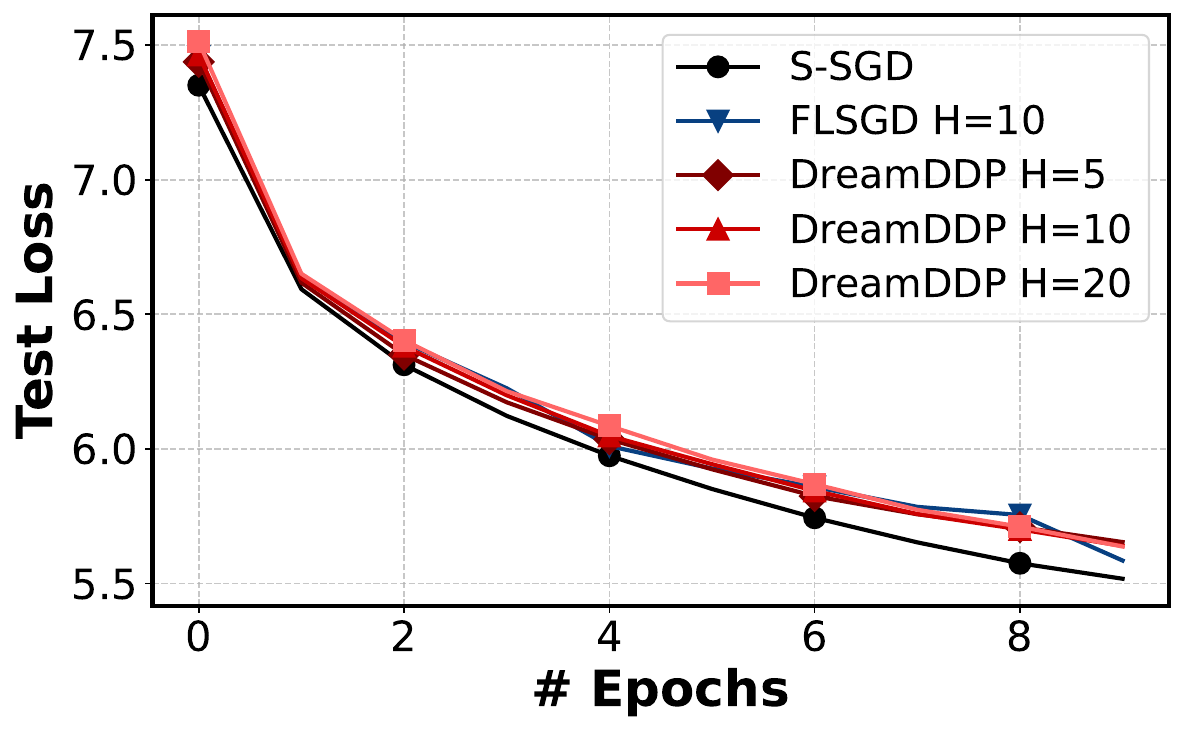}
    }
    \vspace{-0.0cm}
    \caption{Scheduling performance with different frequency $H$ on 32 workers.}
\label{fig:convergence-epoches-32}
\vspace{-0.0cm}
\end{figure}

\subsection{Scheduling Performance}\label{sec:exp-scheduling performance}

\textbf{Different number of layers.} To explore whether DreamDDP can schedule to overlap the communication time to a maximum degree, we compare the non-overlapped time of it with the brute force search. As the time complexity of the brute force is too large ($\frac{(H+L)!}{L!H!}$), we show the results of scheduling with up to 30 layers with $H=5$. Fig.~\ref{fig:extra_time} presents a comparison of the extra communication time not overlapped with computation after scheduling with DreamDDP and the brute force. Fig.~\ref{fig:waittime-layers} shows that with varying number of layers, during a single training cycle (5 iterations), DreamDDP performs equivalently to the brute force optimal solution in most cases, with only slightly longer extra communication times in a few scenarios. 

\textbf{Different bandwidth.} Fig.~\ref{fig:waittime-bandwidth} shows varying bandwidth, DreamDDP also has different performance with the optimal results searched by the brute force. These results demonstrate that DreamDDP effectively identifies near-optimal scheduling strategies while significantly reducing search complexity.

\begin{figure}[!ht]
\centering
\setlength{\abovecaptionskip}{0.cm}
\setlength{\belowcaptionskip}{-0.2cm} 
    \subfigure[different \# of layers]
    {
    \includegraphics[width=0.46\linewidth]{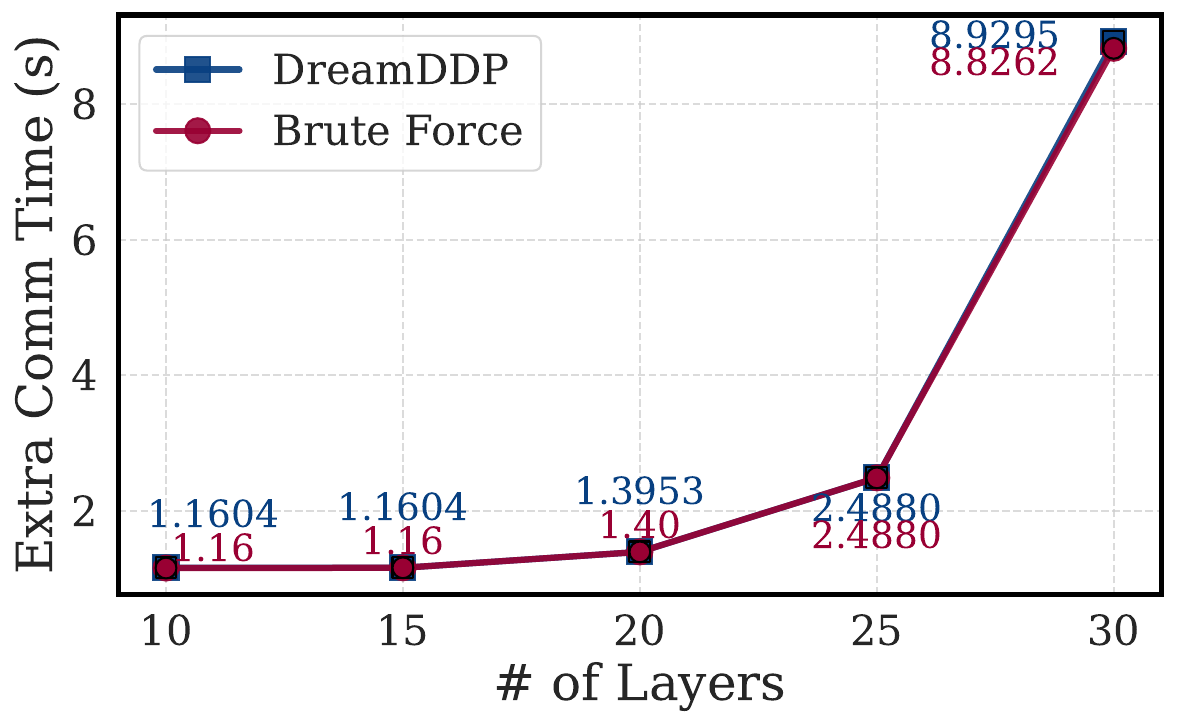}\label{fig:waittime-layers}
    }
    \subfigure[different bandwidths]
    {
    \includegraphics[width=0.46\linewidth]{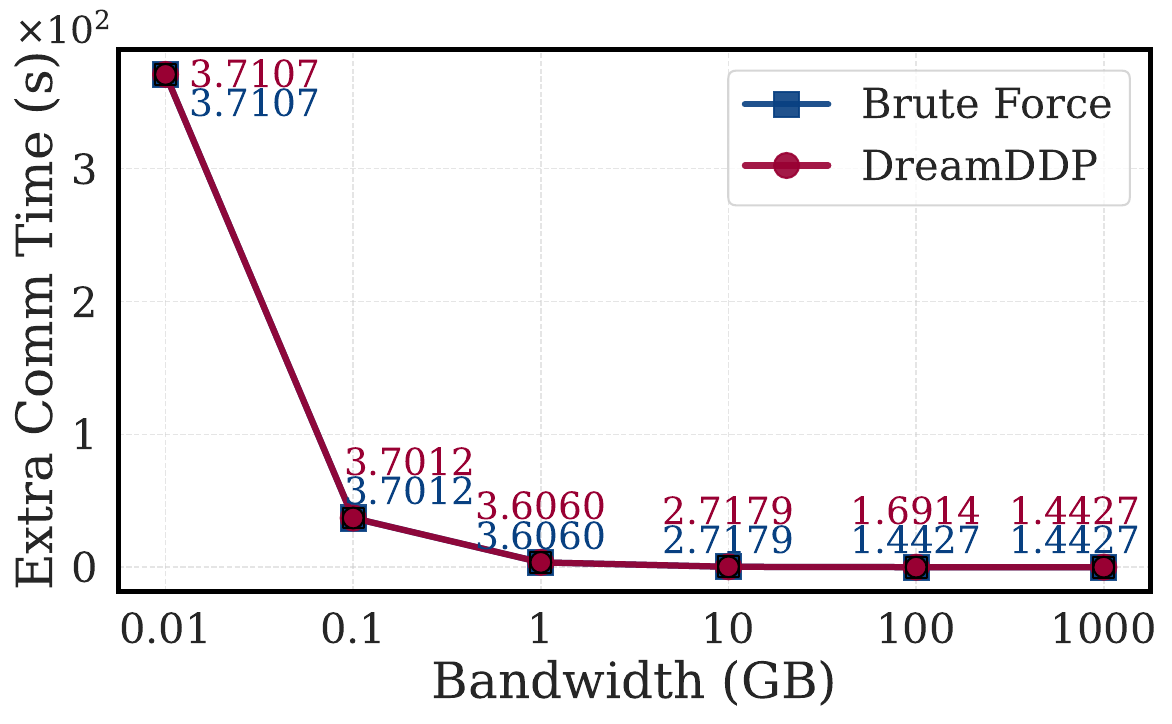}\label{fig:waittime-bandwidth}
    }
    \caption{Comparison of extra communication time not overlapped with computation between.}
\label{fig:extra_time}
\vspace{-0.1cm}
\end{figure}

\subsection{Profiling and Searching Overhead}\label{sec:exp-profiling}
We compare the theoretical and realistic profiled running time of the scheduling algorithms, including DreamDDP and brute force searching on scheduling different models. Figure~\ref{fig:theory-sche-time} shows the theoretical time complexity of scheduling four models with $H=5$. As these models have too many layers, the time complexity of scheduling them with the brute force is infeasible. Thus, we only choose 30 layers for these four models to compare the realistic running time of them. Figure~\ref{fig:realistic-sche-time} shows that DreamDDP's scheduling method significantly reduces the searching complexity by several orders of magnitude, compared with brute force method. DreamDDP reduces the search space, effectively eliminating the need to explore all possible layer partitions.

\begin{figure}[!ht]
\vspace{-0.4cm}
\centering
\setlength{\abovecaptionskip}{0.cm}
\setlength{\belowcaptionskip}{-0.2cm} 
    \subfigure[Theoretical Time Complexity]
    {
    \includegraphics[width=0.46\linewidth]{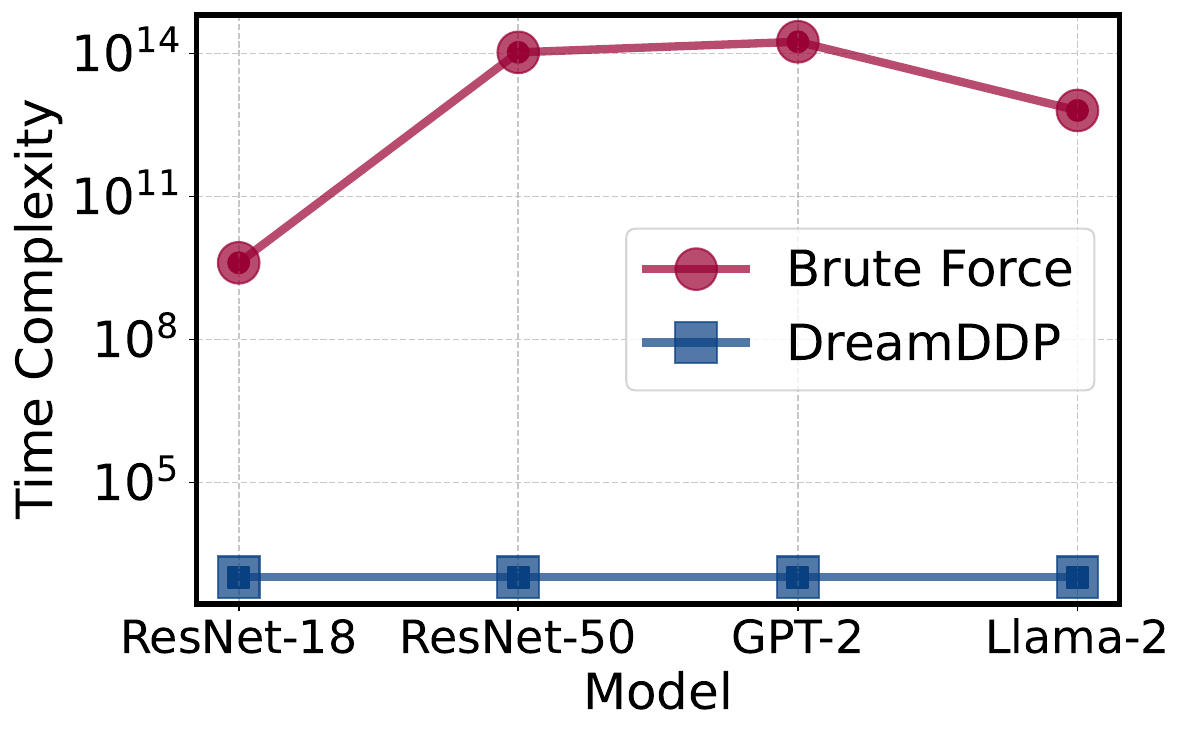}\label{fig:theory-sche-time}
    }
    \subfigure[Real Scheduling Time]
    {
    \includegraphics[width=0.46\linewidth]{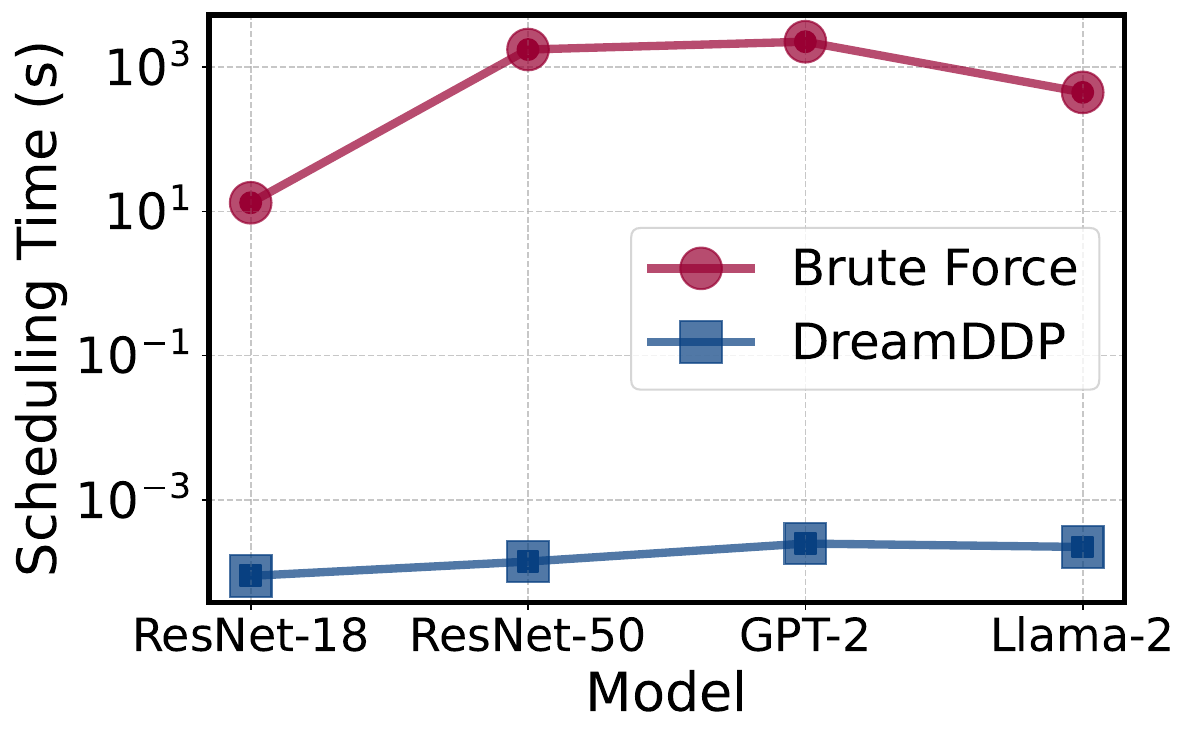}\label{fig:realistic-sche-time}
    }
    \caption{Time complexity and running time of scheduling algorithms.}
\label{fig:Timecomplexity}
\vspace{-0.1cm}
\end{figure}


\section{Related Works}\label{sec:related}


\textbf{Overlapping Communication and Computation.}
In distributed training with S-SGD, the gradient of each layer can be communicated instantly after its BP and overlapped with BP of subsequent layers, thus saving total training time~\cite{203269}. Overlapping communication with feed-forward computing~\cite{10.1145/3341301.3359642} helps further reduce total training time. Some observe that merging small tensors for communication~\cite{9155269} can help improve communication efficiency. ASC-WFBP~\cite{9488803} further improves overlapping by simultaneous communications and scheduling. Different from them, we consider overlapping communication and computation in LSGD.


\textbf{LSGD.}
LSGD~\cite{stich2018localsgd,woodworth2020minibatch,AperiodcLocalSGD,sadiev2022communication} reduces the communication time by periodically averaging model weights instead of gradients. The synchronization frequency strongly influences the convergence and communication time. Thus, some works vary synchronization periods to tradeoff convergence and communication~\cite{stich2018localsgd,AperiodcLocalSGD,wang2019adaptive}. The strong synchronization severely restricts the potential improvements of LSGD, preventing it from communication overlapping. To this end, we propose partial synchronization to decouple the whole model and synchronize layers at different training iterations.


The data distribution in different workers also influences the convergence rate of LSGD~\cite{woodworth2020minibatch}. In this work, we consider the IID data in centralized distributed training rather than federated Learning. Nevertheless, our method and the convergence analysis can be easily extended to the federated learning scenarios~\cite{kairouz2021advances,VHL}.

\textbf{Reducing Communication Costs.}
To reduce the communication time, some works propose compressing the communicated gradients~\cite{10.1145/3452296.3472904,tang2020survey} in S-SGD or model parameters~\cite{10229032,10229086,GossipFL} in LSGD. However, the compression may require large potential computation time and slow convergence, which prevents it from deployment~\cite{agarwal2022utility}. ~\cite{10229032,10229086} proposed compressing local results of Federated Learning~\cite{VHL,GossipFL}, in which LSGD is used. Orthogonal to this direction, our method aims to modify LSGD to improve system efficiency.




\vspace{-0.1cm}
\section{Limitations}\label{sec:limitations}
\vspace{-0.1cm}



\textbf{Heterogeneous Environments.} The real-world geo-distributed settings may involve heterogeneous hardware configurations and bandwidth, leading to various computing and communication time between different clusters. Thus, we may need to consider how to design suitable algorithms and scheduling to address this problem.

\textbf{Dynamic Layer Partitioning.} Our method relies on profiled computation and communication times to optimize scheduling. However, in geo-distributed training scenarios where bandwidth can change frequently, the static profiled times may no longer accurately reflect the current network conditions, potentially reducing the effectiveness of the scheduling strategy.


\vspace{-0.1cm}
\section{Conclusion}\label{sec:conclusion}
\vspace{-0.1cm}
In geo-distributed training, traditional S-SGD suffers from the communication costs of aggregating gradients, especially in low-bandwidth environments and LLMs. LSGD reduces the communication cost, but it requires strict synchronization and suffers from slow convergence speed. In this work, we first propose partial synchronization to relax the strict synchronization of LSGD. Then we design DreamDDP to schedule the communication overlapped with backpropagation computation to reduce the communication overhead of LSGD. DreamDDP also supplements communication with maximally exploiting overlapping without extra training time. The experimental results show that DreamDDP accelerates both wall-clock iteration time and the convergence speed than ASC-WFBP and LSGD (and Adam) on four popular types of DL models including ResNets, GPT, and Llama-2.

\bibliography{cite}

\clearpage
\appendix 
\section*{Appendix}\label{apdx:proof}

In this appendix, we present the detailed proof of Theorem~\ref{theo:converge-bound}.

\begin{lemma}
\label{lemma:b1}
   \textbf{Convergence of  averaged weights. } Let $\{w_r\}_{r\geq0}$ and $\{\Bar{w}_r\}_{r\geq0}$ for $k\in[K]$ be defined as in Equation (\ref{eq:partial-LSGD}),  and let f be $\beta$-smooth and $\mu$-strongly convex and $\eta_r\leq\frac{1}{4\beta}$. Then
    \begin{equation*}
    \small
    \begin{split}
        \mathbb{E}||\Bar{w}_{r+1}-w^*||^2
        &\leq(1-\mu\eta_r)\mathbb{E}||\Bar{w}_r-w^*||^2+\eta_r^2\mathbb{E}||g_r-\Bar{g}_r||^2\\
        &-\frac{1}{2}\eta_r\mathbb{E}(f(\Bar{w}_r)-f^*)+2\eta_r\frac{\beta}{K}\sum_{k=1}^K\mathbb{E}||\Bar{w}_r-w_r^k||^2
    \end{split}
    \end{equation*}
\end{lemma}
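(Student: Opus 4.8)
The argument adapts the standard local-SGD template (in the spirit of~\cite{stich2018localsgd,Bottou2016OptimizationMF}) to the partial-synchronization rule~(\ref{eq:partial-LSGD}), so the first and conceptually key step is to verify that the averaged iterate still performs a clean stochastic gradient step. Averaging the per-layer rule~(\ref{eq:partial-LSGD}) over $k$, I claim $\Bar{w}_{r+1}=\Bar{w}_r-\eta_r g_r$ holds regardless of which layers synchronize at iteration $r$: for a layer $l$ with $r+1 \% H\neq H_l$ the update is a local gradient step whose average is $\Bar{w}_r^l-\eta_r g_r^l$, while for a synchronizing layer the right-hand side is already the average $\frac1K\sum_k(w_r^{k,l}-\eta_r\nabla^l f(w_r^k;\xi_r^k))$, which equals the same quantity. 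Hence synchronization is idempotent on the mean and the virtual sequence obeys full SGD; checking this carefully is the structural fact that lets the partial scheme inherit the S-SGD analysis, and it is what distinguishes this proof from the vanilla one.

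Given this recursion, I would expand $||\Bar{w}_{r+1}-w^*||^2=||\Bar{w}_r-w^*||^2-2\eta_r\langle g_r,\Bar{w}_r-w^*\rangle+\eta_r^2||g_r||^2$ and take the conditional expectation over $\xi_r$. Unbiasedness $\mathbb{E}g_r=\Bar{g}_r$ turns the cross term into $-2\eta_r\langle\Bar{g}_r,\Bar{w}_r-w^*\rangle$, and the bias--variance split $\mathbb{E}||g_r||^2=\mathbb{E}||g_r-\Bar{g}_r||^2+||\Bar{g}_r||^2$ isolates the variance contribution $\eta_r^2\mathbb{E}||g_r-\Bar{g}_r||^2$ that appears verbatim in the target bound.

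The main technical obstacle is that $\Bar{g}_r$ is assembled from gradients evaluated at the individual $w_r^k$ rather than at $\Bar{w}_r$, so I must pay for the consensus gap. I would write $\Bar{w}_r-w^*=(\Bar{w}_r-w_r^k)+(w_r^k-w^*)$ inside $\langle\Bar{g}_r,\Bar{w}_r-w^*\rangle$, lower-bound the $\langle\nabla f(w_r^k),w_r^k-w^*\rangle$ piece by $\mu$-strong convexity (yielding $f(w_r^k)-f^*+\frac\mu2||w_r^k-w^*||^2$) and the $\langle\nabla f(w_r^k),\Bar{w}_r-w_r^k\rangle$ piece by $\beta$-smoothness ($\geq f(\Bar{w}_r)-f(w_r^k)-\frac\beta2||\Bar{w}_r-w_r^k||^2$). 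The $f(w_r^k)$ terms cancel, and convexity of $||\cdot||^2$ converts $\frac1K\sum_k||w_r^k-w^*||^2$ into $||\Bar{w}_r-w^*||^2$, producing the factor $(1-\mu\eta_r)$, a descent term $-2\eta_r(f(\Bar{w}_r)-f^*)$, and the divergence term $\frac{\eta_r\beta}K\sum_k||\Bar{w}_r-w_r^k||^2$.

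Finally I would dispose of the residual $\eta_r^2||\Bar{g}_r||^2$. Using Jensen and $||\nabla f(w_r^k)||^2\le2\beta(f(w_r^k)-f^*)$, together with the bound $\frac1K\sum_k(f(w_r^k)-f^*)\le(f(\Bar{w}_r)-f^*)+\frac\beta{2K}\sum_k||\Bar{w}_r-w_r^k||^2$ (whose first-order term vanishes because $\frac1K\sum_k(w_r^k-\Bar{w}_r)=0$), the stepsize restriction $\eta_r\le\frac1{4\beta}$ gives $2\beta\eta_r^2\le\frac{\eta_r}2$ and $\beta^2\eta_r^2\le\frac{\eta_r\beta}4$. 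Substituting, the surplus descent lets me retain $-\frac12\eta_r(f(\Bar{w}_r)-f^*)$ while discarding the remainder, and the extra consensus contribution stays within the allotted $\frac{2\eta_r\beta}K\sum_k||\Bar{w}_r-w_r^k||^2$, giving exactly the claimed inequality. I expect the bookkeeping of constants in this last step to be the most error-prone part, but it is routine once the recursion of step one is established.
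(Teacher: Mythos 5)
Your proof is correct and takes essentially the same route as the paper's: both analyze the virtual averaged sequence $\Bar{w}_{r+1}=\Bar{w}_r-\eta_r g_r$, isolate the variance term $\eta_r^2\mathbb{E}\|g_r-\Bar{g}_r\|^2$ via unbiasedness of $g_r$, and bound the remaining deterministic quantity by combining $\mu$-strong convexity with $\beta$-smoothness to pay for the consensus gap $\frac{1}{K}\sum_k\|\Bar{w}_r-w_r^k\|^2$, using $\eta_r\le\frac{1}{4\beta}$ to absorb the $\|\Bar{g}_r\|^2$ term. The only difference is one of thoroughness, in your favor: you explicitly verify that averaging commutes with partial synchronization (so the virtual sequence performs clean SGD steps regardless of which layers synchronize) and you prove the perturbed-step inequality that the paper merely asserts ``by $L$-smoothness and $\mu$-strong convexity.''
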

\begin{proof}[Proof of Lemma~\ref{lemma:b1}]
Using the update Eq.~\ref{eq:partial-LSGD}, we have 
\begin{equation}
\small
\label{eq:b0}
\begin{split}
||\Bar{w}_{r+1}-w^*||^2=&||\Bar{w}_r-\eta_rg_r-w^*-\eta_r\Bar{g}_r+\eta_r\Bar{g}_r||^2\\
    =&||\Bar{w}_r-\eta_r\Bar{g}_r-w^*||^2+\eta_r^2||g_r-\Bar{g}_r||^2\\
    &+2\eta_r\langle\Bar{w}_r-w^*-\eta_r\Bar{g}_r,g_r-\Bar{g}_r\rangle.
\end{split}
\end{equation}
By $L$-smoothness and $\mu$-strong convexity of $f$, we can derive the upper bound of the first term as 
\begin{equation}
\small
\label{eq:b4}
    \begin{split}
        &||\Bar{w}_r-w^*-\eta_r\Bar{g}_r||^2\leq(1-\mu\eta_r)||\Bar{w}_r-w^*||^2\\
        &-\frac{1}{2}\eta_r(f(\Bar{w}_r)-f^*)+\frac{2\eta_rL}{M}\sum_{k=1}^K||\Bar{w}_r-w_r^k||^2.
    \end{split}
\end{equation}

Taking (\ref{eq:b4}) back into (\ref{eq:b0}), by taking expectation we have 
\begin{equation}
\label{eq:b5}
\small
\begin{split}
\mathbb{E}||\Bar{w}_{r+1}-w^*||^2 \leq&(1-\mu\eta_r)\mathbb{E}||\Bar{w}_r-w^*||^2+\eta_r^2\mathbb{E}||g_r-\Bar{g}_r||^2\\
    &-\frac{1}{2}\eta_r\mathbb{E}(f(\Bar{w}_r)-f^*)+\frac{2L\eta_r }{M}\sum_{k=1}^K\mathbb{E}||\Bar{w}_r-w_{r}^{k}||^2
\end{split}
\end{equation}
\end{proof}

\begin{lemma}
\label{lemma:gap}
\textbf{Bounded model divergence.} Given the bounded synchronization frequency $H_l\leq H$ for $l\in[L]$ and sequence of decreasing positive learning rates $\{\eta_r\}_{r\geq0}$ satisfying $\eta_r\leq2\eta_{r+H_l}$ for all $r\geq0$, then we have
    \begin{equation}
        \frac{1}{K}\sum_{k=1}^K\mathbb{E}||\Bar{w}_r-w_r^k||^2\leq4H^2\eta_r^2G^2.
    \end{equation}
\end{lemma}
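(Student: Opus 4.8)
The plan is to reduce the global divergence $\frac{1}{K}\sum_{k=1}^K\mathbb{E}\|\bar{w}_r-w_r^k\|^2$ to a sum of per-layer contributions, each controlled by the local gradient steps accumulated since that layer's most recent synchronization. First I would fix a layer $l$ and let $\tau_l\leq r$ denote the most recent iteration at which layer $l$ was averaged across workers. Because each layer is synchronized exactly once every $H$ iterations (namely when $r+1 \bmod H = H_l$), we have $r-\tau_l\leq H$, and at $\tau_l$ all workers agree on layer $l$, i.e. $w_{\tau_l}^{k,l}=\bar{w}_{\tau_l}^{l}$ for every $k$ (this also holds at initialization with $\tau_l=0$). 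Between $\tau_l$ and $r$ only local steps touch layer $l$, so the telescoping identity $w_r^{k,l}-\bar{w}_{\tau_l}^{l}=-\sum_{r'=\tau_l}^{r-1}\eta_{r'}\nabla^l f(w_{r'}^k;\xi_{r'}^k)$ holds.

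Next I would exploit that $\bar{w}_r^{l}$ is the arithmetic mean of $\{w_r^{k,l}\}_k$ and that the mean minimizes the average squared distance to any fixed point. Taking the fixed reference to be the synchronized point $\bar{w}_{\tau_l}^{l}$ gives $\frac{1}{K}\sum_k\|\bar{w}_r^{l}-w_r^{k,l}\|^2\leq\frac{1}{K}\sum_k\|w_r^{k,l}-\bar{w}_{\tau_l}^{l}\|^2$. Substituting the telescoped expression and applying Jensen's inequality over the at-most-$H$ summands yields $\frac{1}{K}\sum_k\|\bar{w}_r^{l}-w_r^{k,l}\|^2\leq\frac{H}{K}\sum_k\sum_{r'=\tau_l}^{r-1}\eta_{r'}^2\|\nabla^l f(w_{r'}^k;\xi_{r'}^k)\|^2$, where the leading $H$ comes from $r-\tau_l\leq H$.

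Then I would sum over all layers and reorganize the double sum. The learning-rate regularity $\eta_{r'}\leq 2\eta_r$ on the window $r'\in[r-H,r-1]$, which follows from $\{\eta_r\}$ being decreasing together with the stated condition $\eta_r\leq 2\eta_{r+H_l}$, converts each $\eta_{r'}^2$ into $\leq 4\eta_r^2$. The decisive bookkeeping step is that, for each fixed $r'$, summing the per-layer gradient norms over the layers whose last synchronization precedes $r'$ reassembles at most the full squared gradient, $\sum_l\|\nabla^l f(w_{r'}^k;\xi_{r'}^k)\|^2=\|\nabla f(w_{r'}^k;\xi_{r'}^k)\|^2$. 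Since $r'$ ranges over at most $H$ iterations and $\mathbb{E}\|\nabla f(w_{r'}^k;\xi_{r'}^k)\|^2\leq G^2$, taking expectations gives $\frac{1}{K}\sum_k\mathbb{E}\|\bar{w}_r-w_r^k\|^2\leq\frac{4H\eta_r^2}{K}\sum_k HG^2=4H^2\eta_r^2G^2$, which is exactly the claimed bound.

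The main obstacle is the bookkeeping across heterogeneous per-layer synchronization times: unlike vanilla local SGD, where all coordinates share a single synchronization clock, here each layer carries its own $\tau_l$, so I must verify that the common window $[r-H,r-1]$ uniformly covers every layer's local phase, and that interchanging the order of summation (first over active layers for fixed $r'$, then over $r'$) correctly recombines the per-layer gradient norms into the full gradient norm without double counting. Establishing $\eta_{r'}\leq 2\eta_r$ throughout this window from the hypothesis is the other point that requires care, since the stated condition advances the index by $H_l$ rather than the full $H$ and must be combined with monotonicity of the step sizes.
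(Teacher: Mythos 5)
Your proof is correct and follows essentially the same route as the paper's: a per-layer decomposition of the divergence, the variance identity (the mean minimizes average squared distance) taken against each layer's last synchronization point, telescoping the at most $H$ local steps, Cauchy--Schwarz/Jensen over the window, the step-size regularity $\eta_{r'}\le 2\eta_r$, and recombination of the per-layer gradient norms via $\sum_{l}\|\nabla^l f(w_{r'}^k;\xi_{r'}^k)\|^2=\|\nabla f(w_{r'}^k;\xi_{r'}^k)\|^2$ before applying the bound $G^2$. If anything, your bookkeeping with explicit per-layer synchronization times $\tau_l$ is more careful than the paper's, which writes a single reference index $r_0$ for all layers and leaves the heterogeneous-window recombination implicit.
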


\begin{proof}[Proof of Lemma~\ref{lemma:gap}]
As $w_r^k$ is the concatenation of $w_r^{k,l}$ with respect to $l$, we have $||\Bar{w}_r-w_r^k||^2=\sum_{l=1}^L||\Bar{w}_r^l-w_r^{k,l}||^2$, $\sum_{l=1}^L||\nabla^l f(w_h^k;\xi_h^k)||^2=||\nabla f(w_h^k;\xi_h^k)||^2$. Using $\mathbb{E}||X-\mathbb{E}X||^2=\mathbb{E}||X||^2-||\mathbb{E}X||^2$ and $||\sum_{i=1}^{H_l}a_i||^2\leq H_l\sum_{i=1}^{H_l}||a_i||^2$, we have
\begin{equation*}
\small
    \begin{split}
    \frac{1}{K}\sum_{k=1}^K\mathbb{E}||\Bar{w}_r-w_r^k||^2
        &=\frac{1}{K}\sum_{k=1}^K\sum_{l=1}^L\mathbb{E}||w_{r}^{k,l}-w_{r_0}^{k,l}-(\Bar{w}_{r}^l-w_{r_0}^{k,l})||^2\\
        &\leq\frac{1}{K}\sum_{k=1}^K\sum_{l=1}^L H_l \eta_{r_0}^2\sum_{h=r_0}^{r-1}\mathbb{E}||\nabla^l f(w_h^k;\xi_h^k)||^2\\
        &\leq\frac{1}{K}\sum_{k=1}^K H \eta_{r_0}^2\sum_{h=r_0}^{r-1}\mathbb{E}||\nabla f(w_h^k;\xi_h^k)||^2\\
        &\leq4H^2\eta_r^2G^2.
    \end{split}
\end{equation*}
\end{proof}

\begin{lemma}
\label{lemma:supermartingale}
\textbf{Martingale convergence.}    Let $\{a_r\}_{r\geq0}, a_r\geq0,\{e_r\}_{r\geq0},e_r\geq0$ be sequences satisfying
\begin{equation}
\small
    a_{r+1}\leq(1-\mu\eta_r)a_r-\eta_re_rA+\eta_r^2B+\eta_r^3C,
\end{equation}
for $\eta_r=\frac{4}{\mu(a+r)}$ and constants $A>0,B,C\geq0,\mu>0,a>1$. Then
\begin{equation}
\small
    \frac{A}{S_R}\sum_{r=1}^{R-1}p_re_r\leq\frac{\mu a^3}{4S_R}a_0+\frac{2R(R+2a)}{\mu S_R}B+\frac{16R}{\mu^2S_R}C,
\end{equation}
for $p_r=(a+r)^2$ and $S_R\triangleq\sum_{r=0}^{R-1}p_r=\frac{R}{6}(2R^2+6aR-3R+6a^2-6a+1)\geq\frac{1}{3}R^3$.
\end{lemma}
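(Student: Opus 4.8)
The plan is to prove Lemma~\ref{lemma:supermartingale} as a weighted telescoping argument tailored to the prescribed step size $\eta_r=\frac{4}{\mu(a+r)}$ and weights $p_r=(a+r)^2$. First I would divide the given recursion by $\eta_r>0$ and rearrange to isolate the descent quantity, obtaining $A e_r \leq \frac{1-\mu\eta_r}{\eta_r}a_r - \frac{1}{\eta_r}a_{r+1} + B\eta_r + C\eta_r^2$. Multiplying through by $p_r$ and summing over $r=0,1,\dots,R-1$ yields $A\sum_{r=0}^{R-1}p_r e_r$ on the left; since $p_r,e_r\geq0$, this already dominates the quantity $A\sum_{r=1}^{R-1}p_r e_r$ appearing in the claim, so it suffices to bound the right-hand sum. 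The right side splits into a telescoping-type sum in the $a_r$ together with two explicit weighted sums of $\eta_r$ and $\eta_r^2$.

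The crux is the telescoping sum $\sum_{r=0}^{R-1}\big(\frac{(1-\mu\eta_r)p_r}{\eta_r}a_r - \frac{p_r}{\eta_r}a_{r+1}\big)$. Writing $\gamma_r:=\frac{p_r}{\eta_r}=\frac{\mu(a+r)^3}{4}$ and reindexing the second sum, this equals $\frac{(1-\mu\eta_0)p_0}{\eta_0}a_0 + \sum_{r=1}^{R-1}\big(\frac{(1-\mu\eta_r)p_r}{\eta_r}-\gamma_{r-1}\big)a_r - \gamma_{R-1}a_R$. The key step is to verify the monotonicity inequality $\frac{(1-\mu\eta_r)p_r}{\eta_r}\leq\gamma_{r-1}$. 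Since $\mu\eta_r=\frac{4}{a+r}$, the left side equals $\frac{\mu(a+r)^2(a+r-4)}{4}$, so after cancelling $\frac{\mu}{4}$ the inequality reduces to the cubic comparison $s^2(s-4)\leq(s-1)^3$ with $s=a+r$, equivalently $s^2+3s-1\geq0$, which holds for every $s=a+r\geq a>1$. Consequently each middle bracket is nonpositive and, together with $-\gamma_{R-1}a_R\leq0$ (using $a_R\geq0$), the whole telescoping sum is bounded by its first term $\frac{\mu a^2(a-4)}{4}a_0\leq\frac{\mu a^3}{4}a_0$. This is where essentially all the care lies: handling the two boundary terms ($r=0$ and $r=R$) and the polynomial comparison are the only non-mechanical parts of the argument.

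It then remains to evaluate the two residual sums, which the weight choice makes exact rather than merely telescoping. Because $p_r\eta_r=\frac{4(a+r)}{\mu}$ is linear in $r$, the $B$-term gives $B\sum_{r=0}^{R-1}p_r\eta_r=\frac{4B}{\mu}\big(aR+\frac{R(R-1)}{2}\big)=\frac{2BR(2a+R-1)}{\mu}\leq\frac{2R(R+2a)}{\mu}B$; and because $p_r\eta_r^2=\frac{16}{\mu^2}$ is constant, the $C$-term gives $C\sum_{r=0}^{R-1}p_r\eta_r^2=\frac{16CR}{\mu^2}$. Combining the three pieces and dividing by $S_R$ produces exactly the stated bound, and the remark $S_R=\sum_{r=0}^{R-1}(a+r)^2\geq\sum_{j=1}^{R}j^2=\frac{R(R+1)(2R+1)}{6}\geq\frac{1}{3}R^3$ (using $a>1$) supplies the final lower bound on $S_R$. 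I expect no serious obstacle beyond the polynomial verification in the telescoping step; the remaining computations are direct consequences of the algebraically convenient form of $p_r$ and $\eta_r$, which is precisely why these particular weights and step sizes were selected.
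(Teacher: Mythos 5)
Your proof is correct and complete, and every step checks out. Dividing the recursion by $\eta_r$, weighting by $p_r$, and telescoping is valid; the crucial monotonicity inequality $\frac{(1-\mu\eta_r)p_r}{\eta_r}\leq\frac{p_{r-1}}{\eta_{r-1}}$ does reduce, with $s=a+r$, to $s^2(s-4)\leq(s-1)^3$, i.e.\ $s^2+3s-1\geq0$, which holds for all $s\geq a>1$; the two boundary terms are handled correctly (the leading term is $\frac{\mu a^2(a-4)}{4}a_0\leq\frac{\mu a^3}{4}a_0$ since $a_0\geq 0$, and $-\frac{p_{R-1}}{\eta_{R-1}}a_R\leq 0$ since $a_R\geq 0$); and the exact evaluations $p_r\eta_r=\frac{4(a+r)}{\mu}$ and $p_r\eta_r^2=\frac{16}{\mu^2}$ produce the $B$- and $C$-terms, with $\frac{2BR(2a+R-1)}{\mu}\leq\frac{2R(R+2a)}{\mu}B$ and $\frac{16CR}{\mu^2}$ exactly as stated. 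Dropping the nonnegative $r=0$ summand to pass from $\sum_{r=0}^{R-1}p_re_r$ to $\sum_{r=1}^{R-1}p_re_r$, and the lower bound $S_R\geq\sum_{j=1}^{R}j^2\geq\frac{1}{3}R^3$, are likewise fine. One contextual remark: the paper itself never proves this lemma --- it is stated in the appendix without proof, being imported (up to renaming $t\to r$, $w_t\to p_r$) from Stich's local SGD analysis~\cite{stich2018localsgd} --- so there is no in-paper argument to compare yours against. Your weighted-telescoping derivation is the standard proof of this decreasing-stepsize lemma, so rather than diverging from the paper, it supplies precisely the derivation the paper omits.
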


\begin{proof}[Proof of Theorem~\ref{theo:converge-bound}]
Based on Lemma~\ref{lemma:b1} and Lemma~\ref{lemma:gap} we get
\begin{equation}
\small
\begin{split}
    \mathbb{E}||\Bar{w}_{r+1}-w^*||^2
    \leq&(1-\mu\eta_r)\mathbb{E}||\Bar{w}_r-w^*||^2+\frac{\sigma^2}{K}\eta_r^2\\
    &-\frac{1}{2}\eta_r\mathbb{E}(f(\Bar{w}_r)-f^*)+8G^2H^2\beta\eta_r^3.
\end{split}
\label{eq:converge-seq}
\end{equation}
By Lemma~\ref{lemma:supermartingale}, taking $a_r=\mathbb{E}||\Bar{w}_r-w^*||^2, e_r=\Ebb(f(\Bar{w}_r)-f^*), A=\frac{1}{2}, B=\frac{\sigma^2}{K} ,C=8G^2H^2\beta$, we have
\begin{equation}
    \small
    \label{eq:proof-marting}
    \begin{split}
          \frac{1}{2S_R}\sum_{r=1}^R p_r\Ebb(f(\Bar{w}_r)-f^*)&\leq \frac{\mu a^3}{4S_R}\mathbb{E}||\Bar{w}_0-w^*||^2\\
          &+\frac{2R\sigma^2(R+2a)}{\mu KS_R}+\frac{128G^2H^2R}{\mu^2S_R}.
    \end{split}
\end{equation}
By the convexity of $f$, we have 
\begin{equation}
\label{eq:proof-convexity}
    \mathbb{E}f(\hat{w}_R)-f^*\leq\frac{1}{S_R}\sum_{r=1}^R\mathbb{E}(f(\Bar{w}_r)-f^*)
\end{equation}
By Eq.~\ref{eq:proof-marting},~\ref{eq:proof-convexity} we get the result
\begin{equation}
    \small
    \begin{split}
        \mathbb{E}f(\hat{w}_R)-f^*&\leq \frac{\mu a^3}{2S_R}||w_0-w^*||^2\\
        &+\frac{4R(R+2a)}{\mu K S_R}\sigma^2+\frac{256R}{\mu^2S_R}G^2H^2\beta,
    \end{split}
\end{equation}
which completes the proof of Theorem~\ref{theo:converge-bound}.
\end{proof}




\end{document}